\newtheorem{theorem}{Theorem}
\newtheorem{proposition}[theorem]{Proposition}
\newtheorem{lemma}[theorem]{Lemma}
\newtheorem{corollary}[theorem]{Corollary}
\newtheorem{remark}[theorem]{Remark}
\newcommand{\T}{\top}
\newcommand{\st}{\mathrm{s.t.}}
\newcommand{\argmax}{\mathop{\mathrm{argmax}}}
\newcommand{\EE}{ \mathbb{E} } 
\newcommand{\R}{ \mathbb{R} }
\newcommand{\EEb}[1]{ \EE \left[ {#1} \right]}
\newcommand{\tr}[1]{ \mathrm{Tr}\left\{ {#1} \right\} }
\newcommand{\hlf}{\frac{1}{2}}
\def\wrt{w.r.t\onedot}
\newcommand{\X}{ \mathcal{X} }
\newcommand{\Y}{ \mathcal{Y} }
\newcommand{\bfP}{{\bf P}}
\newcommand{\bfQ}{{\bf Q}}
\newcommand{\bfD}{{\bf D}}
\newcommand{\bfH}{{\bf H}}
\newcommand{\bfA}{{\bf{A}}}
\newcommand{\bfPi}{{\bf{\Pi}}}
\newcommand{\bfrho}{{\bf{\rho}}}
\def\WLOG{\emph{w.l.o.g.}\onedot}
\def\hadm{\odot}
\def\onedot{~}
\newcommand{\EEQ}[1]{ \EE_{\bfQ} \left[ {#1} \right]}
\begin{document}
\title{Characterization of the Distortion-Perception Tradeoff for Finite Channels with Arbitrary Metrics} 

\author{%
 \IEEEauthorblockN{Dror Freirich, Nir Weinberger and Ron Meir}
 \IEEEauthorblockA{Viterbi Faculty of Electrical and Computer Engineering\\
                   Technion - Israel Institute of Technology                 }
}

\maketitle

\begin{abstract}
   Whenever inspected by humans, reconstructed signals should not be distinguished from real ones. Typically, such a high perceptual quality comes at the price of high reconstruction error, and vice versa.
We study this distortion-perception (DP) tradeoff over finite-alphabet channels, for the Wasserstein-$1$ distance induced by a general metric as the perception index, and an arbitrary distortion matrix. Under this setting, we show that computing the DP function and the optimal
reconstructions is equivalent to solving a set of linear programming problems. We provide a structural characterization of the DP tradeoff, where the DP function is piecewise linear in the perception index. We further derive a closed-form expression for the case of binary sources.
\end{abstract}

\section{Introduction}
\begin{figure*}[t]
\center
\includegraphics[viewport=8bp 8bp 420bp 305bp,clip,width=.38\linewidth]{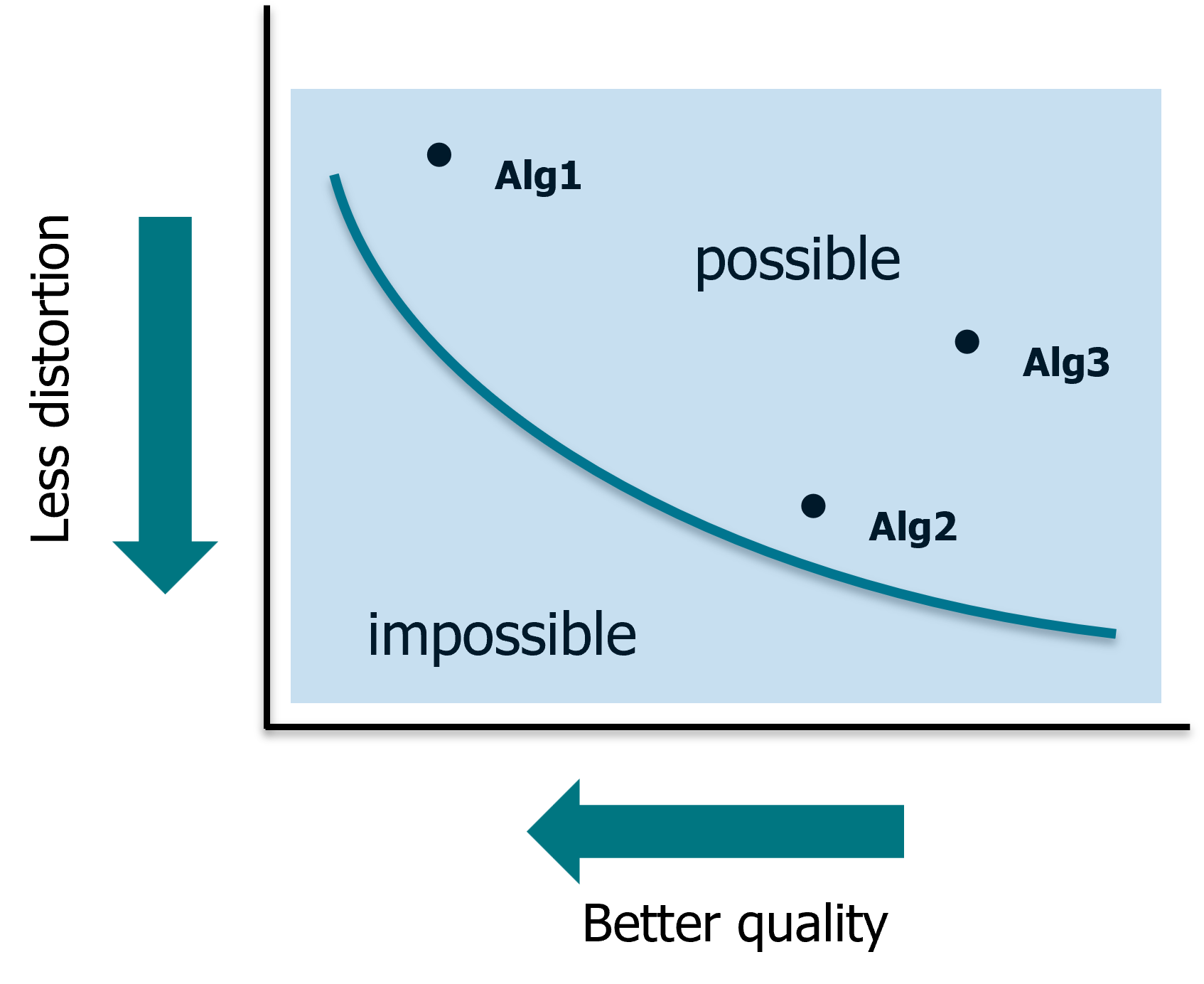}
\includegraphics[viewport=20bp  22bp 420bp 305bp,clip,width=.38\linewidth]{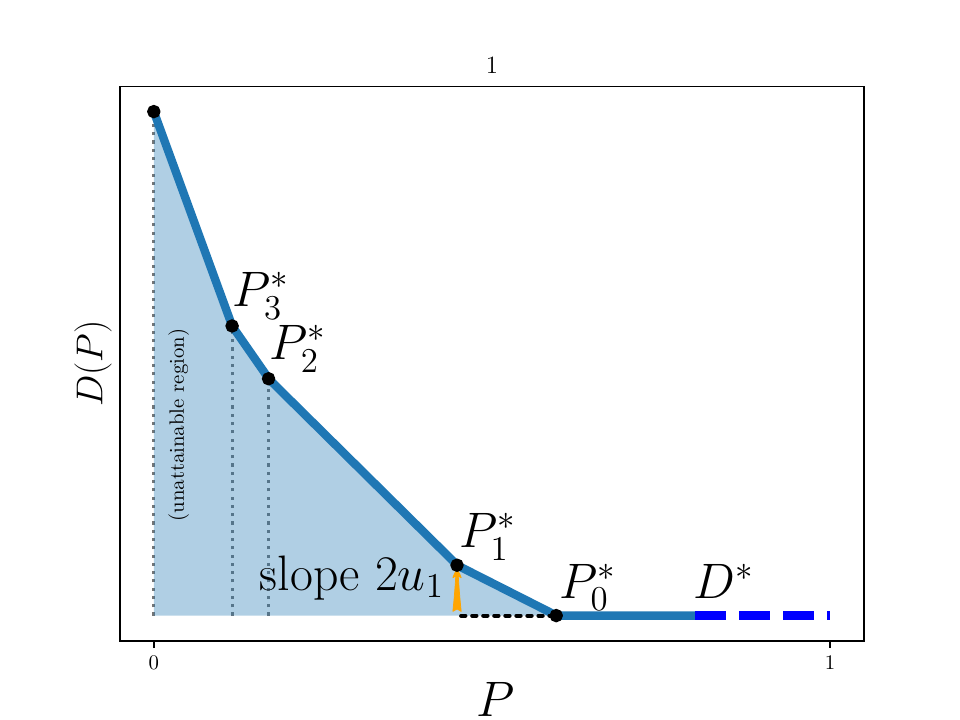}
\caption{\textbf{The distortion-perception (DP) function.} \label{fig:dp}
\textbf{(Left)} The minimal distortion possible for a certain level of perceptual quality forms a convex, non-increasing  curve. The region below the curve can not be attained by any reconstruction method. \textbf{(Right)} In our discrete setting,
$D(P)$ is a piecewise linear function. Breakpoints $P^*_i$ and slopes $2u_i$ are given explicitly by Theorem~\ref{thm::DP:binarysource} for binary sources.}
\end{figure*}
The reconstruction of a signal from degraded data is required in numerous settings across science and engineering. 
Until recently, reconstruction algorithms' performance has been measured by its mean \textit{distortion}, such as mean squared error (MSE). For that reason, many methods aimed to minnimize distortion measures such as MSE and peak signal-to-noise ratio (PSNR). 
However, in systems whose outputs are inspected by  human users,
reconstructions should not be easily distinguished from signals typical to the source domain.
Therefore, many current works target \textit{perceptual quality}
rather than distortion (\textit{e.g.} in image restoration, see  \cite{adrai2023deep, wang2018esrgan,lim2017enhanced,ledig2017photo}).

Mathematically, the probability of success in a hypothesis test is known to be proportional to the Total-Variation (TV) distance between distributions \cite{nielsen2013hypothesis}.
Hence, high {perceptual quality} is considered to be achieved when the distribution of restored signals is close to the real signals distribution \cite{blau2018perception}.

Good {perceptual
quality} generally comes at the price of high reconstruction error and vice versa. 
This leads to a tradeoff between distortion and perception, first studied in \cite{blau2018perception}. The central problem is thus to quantify the \textit{distortion-perception (DP) function}, which is the minimal distortion possible for a certain level of perceptual quality. 
The DP problem was studied by various authors. Specifically, \cite{freirich2021theory} studied the DP function in real spaces, for the MSE distortion and the Wasserstein-$2$ perception index. In discrete spaces, \cite{qian2022rate} characterized the special case of a binary source, for the Hamming distortion and the TV perception index.

In this paper, we focus on discrete spaces, and investigate the DP tradeoff for general finite-alphabet channels and  general distortion matrices. As the perception index, we consider the \textit{Wasserstein}-$1$ distance induced by a general metric, which generalizes the TV distance \cite{ambrosio2008gradient,van2014probability,raginsky2013concentration}. 
We show that finding the DP function and the optimal reconstruction for this setting is equivalent to solving a set of \textit{linear} problems, and the result is always a piecewise linear function of the perception index, regardless of the channel size, the underlying distributions or distortion measure. This stems from the properties of the \textit{dual} feasible set.
We further revisit the binary setting of \cite{qian2022rate}, and derive a closed-form expression for the DP function, now considering a general distortion measure. We provide a self-contained proof for this case based on our novel analysis of the general setting.

\section{Preliminaries}
\subsection{The distortion-perception tradeoff}
\label{appsec::intro}

Let $X,Y$ be random variables taking values in some complete separable metric spaces $\mathcal{X\mathrm{,}Y}$, respectively. We assume the existence of the joint probability $p_{X,Y}$ on $\mathcal{X}\times\mathcal{Y}$, and a Borel lower-bounded distortion
	function $d:\mathcal{X}\times\mathcal{X}\rightarrow \R^{+}\cup\{0\}$.
 An estimator $\hat{X}\in\mathcal{X}$ is a random variable on $\X$, defined by its
 distribution conditioned on the measurement $Y$, $p_{\hat{X}|Y}$, with marginal distribution $p_{\hat{X}}$. 
 
An \textit{optimal} estimator for the DP tradeoff, is an estimator that minimizes the expected \textit{distortion} $\EE [d(X,\hat{X})]$ under the \textit{perception} constraint $d_{p}(p_X,p_{\hat{X}})\leq P$. Here, $d_{p}$ is a divergence between probability measures.
\cite{blau2018perception} introduced the DP \textit{function}
	\begin{equation}
	D(P)\triangleq\min_{p_{\hat{X}|Y}}\left\{\EE[d(X,\hat{X})] \;:\; d_{p}(p_X,p_{\hat{X}})\leq P\right\}.
 \label{APP_eq:D_P::General_definition}
	\end{equation}
	The expectation is taken \textit{w.r.t.} the probability measure induced by $p_{XY}$ and $p_{\hat X|Y}$ where we assume  that $X,\hat{X}$ are independent given $Y$.
We have the following result of \cite[Thm.~2]{blau2018perception}.
\begin{theorem}
\label{appthm::the_DP_Blau}
    (The perception-distortion tradeoff).
    If $d_p(p, q)$ is convex in its
    second argument, then the distortion-perception function \eqref{APP_eq:D_P::General_definition} is monotonically non-increasing and convex.
\end{theorem}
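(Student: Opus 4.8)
The plan is to prove the two assertions—monotonicity and convexity—separately, both by elementary arguments that only use set inclusion for feasible estimators and a time‑sharing (mixture) construction; the assumed convexity of $d_{p}$ in its second argument enters only in the convexity part.

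For monotonicity, I would argue directly from the definition \eqref{APP_eq:D_P::General_definition}. If $P_{1}\le P_{2}$, then every estimator $p_{\hat X|Y}$ satisfying the constraint $d_{p}(p_{X},p_{\hat X})\le P_{1}$ also satisfies $d_{p}(p_{X},p_{\hat X})\le P_{2}$, so the feasible set for $P_{1}$ is contained in that for $P_{2}$. Minimizing the same objective $\EE[d(X,\hat X)]$ over a larger set cannot increase the value, hence $D(P_{2})\le D(P_{1})$.

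For convexity, fix $P_{1},P_{2}$ in the domain of $D$, fix $\lambda\in[0,1]$, and set $P_{\lambda}:=\lambda P_{1}+(1-\lambda)P_{2}$. I would pick estimators $p^{(1)}_{\hat X|Y},p^{(2)}_{\hat X|Y}$ feasible for $P_{1},P_{2}$ and attaining distortion arbitrarily close to $D(P_{1}),D(P_{2})$ (working along sequences and passing to the limit at the end if the infimum is not attained), with corresponding estimators $\hat X^{(1)},\hat X^{(2)}$. Enlarging the probability space, introduce a Bernoulli switch $B$ independent of $(X,Y)$ with $\Pr(B=1)=\lambda$, and generate $\hat X$ from $p^{(1)}_{\hat X|Y}$ when $B=1$ and from $p^{(2)}_{\hat X|Y}$ when $B=2$, keeping $X,\hat X$ independent given $Y$. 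This yields the valid conditional law $p_{\hat X|Y}=\lambda p^{(1)}_{\hat X|Y}+(1-\lambda)p^{(2)}_{\hat X|Y}$, and, since $\hat X-Y-X$ and $p_{\hat X}$ depends linearly on $p_{\hat X|Y}$, the induced marginal is $p_{\hat X}=\lambda p^{(1)}_{\hat X}+(1-\lambda)p^{(2)}_{\hat X}$. By linearity of expectation the distortion of this estimator is $\lambda\,\EE[d(X,\hat X^{(1)})]+(1-\lambda)\,\EE[d(X,\hat X^{(2)})]$, which is (arbitrarily close to) $\lambda D(P_{1})+(1-\lambda)D(P_{2})$, while convexity of $d_{p}$ in its second argument gives
\begin{align*}
d_{p}(p_{X},p_{\hat X}) &= d_{p}\!\left(p_{X},\,\lambda p^{(1)}_{\hat X}+(1-\lambda)p^{(2)}_{\hat X}\right)\\
&\le \lambda\, d_{p}(p_{X},p^{(1)}_{\hat X})+(1-\lambda)\, d_{p}(p_{X},p^{(2)}_{\hat X})\\
&\le \lambda P_{1}+(1-\lambda)P_{2}=P_{\lambda},
\end{align*}
so the time‑sharing estimator is feasible for $P_{\lambda}$. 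Hence $D(P_{\lambda})\le\lambda D(P_{1})+(1-\lambda)D(P_{2})$ (taking $\varepsilon\to0$ in the limiting case), which is exactly convexity.

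I do not expect a serious obstacle here: the content is essentially bookkeeping. The only mildly delicate points are the measure‑theoretic legitimacy of the time‑sharing construction on general complete separable metric spaces—resolved by augmenting the probability space with the independent Bernoulli switch $B$—and the case where the infimum in \eqref{APP_eq:D_P::General_definition} is not attained, handled by the routine $\varepsilon$‑argument above. It is worth emphasizing that convexity of $d_{p}$ in its second argument is used exactly once and is indispensable, whereas no structural assumption on $d$ beyond being a fixed lower‑bounded function is needed.
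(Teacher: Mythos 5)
Your proof is correct and is essentially the standard argument for this result (the paper itself states Theorem~\ref{appthm::the_DP_Blau} without proof, citing \cite[Thm.~2]{blau2018perception}, whose proof proceeds exactly by feasible-set inclusion for monotonicity and by mixing two estimators via an independent Bernoulli switch for convexity). The only blemish is the typo ``$B=2$'' where you mean the complementary outcome of the Bernoulli variable; otherwise the bookkeeping, including the linearity of both $p_{\hat X}$ and the expected distortion in $p_{\hat X|Y}$, is handled properly.
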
 

\noindent This is the case for the TV and Wasserstein distances discussed in this paper.

\subsection{Related work}
Apart from the general properties given by Theorem \ref{appthm::the_DP_Blau}, the precise nature of DP functions depends on the exact setup. \cite{freirich2021theory} fully characterized this function in real spaces, considering the MSE and Wasserstein-$2$ indices. 
Under this setting, the distortion-perception function is always quadratic, and possesses a closed-form expression for Gaussian channels.
In this work we discuss discrete signals, where we provide an analogous structural characterization, in which $D(P)$ is always piecewise linear (Theorem~\ref{thm:piecewiseLinear}).

Reconstruction problems with constrained output distributions were studied in optimal transport \cite{10017280}, lossy compression and quantization \cite{6963478,7138587}.
Recently, \cite{freirich2023perceptual} investigated the cost of perfect perceptual consistency constraints in online estimation settings.
The DP tradeoff was also extended to lossy compression by presenting the rate-distortion-perception (RDP) function \cite{blau2019rethinking,salehkalaibar2023rate,yan2022optimally}, which is the minimal rate of a code whose decoding allows a desired tradeoff between reconstruction and perceptual quality. A coding theorem was introduced for this setting \cite{theis2021coding,wagner2022rate}, where the properties of optimal codes are investigated \cite{chen2022rate}.

In the context of RDP theory, \cite{qian2022rate} investigated channels with binary sources. 
They showed \cite[Thm.~7]{qian2022rate} that for the Hamming loss with the TV perceptual index, the DP function is a piecewise linear function whose breakpoints are given by an explicit formula.
Here, we extend this result, considering an arbitrary distortion measure (Theorem~\ref{thm::DP:binarysource}).

\section{Problem formulation\label{subsec:Discrete::Problem-formulation}}

In this paper, we discuss the discrete case, where $\X$ and $\Y$ are finite spaces. Let $X,Y$ be discrete variables defined on finite alphabets
$\mathcal{X}=\{x_1,\ldots,x_{n_x}\}, \mathcal{Y}=\{y_1,\ldots,y_{n_y}\}$,
where $X$ is the variable of interest, and $Y$ is a measurement of $X$ over a noisy channel. Their joint probability $p_{X,Y}\in\mathcal{P}(\mathcal{X}\times\mathcal{Y})$
is represented by the matrix $\bfP_{X,Y}=\{p(x,y)\}_{x,y\in\mathcal{X}\mathcal{\times Y}} \in \mathbb{R}^{|\mathcal{X}|\times|\mathcal{Y}|}$, and the marginal distributions $p_X$ and $p_Y$ are given by the vectors $\bfP_{X}\in\R^{|\mathcal{X}|},\bfP_{Y}\in\R^{|\mathcal{Y}|}$.
We assume that for each letter in
the channel's output, $p_Y(y_{i})>0$ (\textit{i.e.}, we ignore unused symbols). 
A randomized estimator $\hat{X}\in\mathcal{X}$ of $X$ from $Y$ is defined by a stochastic
transition matrix $\bfQ=\bfQ_{\hat{X}|Y}\in\R^{|\mathcal{X}|\times|\mathcal{Y}|}$
whose entries are the probabilities $q(\hat{x}|y)$ to reconstruct
the symbol $\hat{x}\in\mathcal{X}$ given that the channel output is $Y=y\in\mathcal{Y}$. We assume the
Markov relation where $X,\hat{X}$
are independent given $Y$.
The arbitrary \textit{distortion} matrix is given by $\bfD=\{d(x,\hat{x})\}_{x,\hat{x}\in\mathcal{X}^{2}}\in\R^{|\mathcal{X}|\times|\mathcal{X}|}$,
where the expected distortion 
\begin{equation}
\EEQ{d(X,\hat{X})}=\tr{\bfP_{X,Y}^{\T}\bfD\bfQ}
\end{equation}
 should be minimized w.r.t. $q(\hat{x}|y),\hat{x},y\in\mathcal{X}\times\mathcal{Y}$.
The marginal distribution $p_{\hat{X}}$ of $\hat{X}$ is given by
the vector $\bfP_{\hat{X}}=\bfQ\bfP_{Y}$. 
We are interested in analyzing the \textit{distortion-perception}
(DP) \textit{function} 
\cite{blau2018perception}
 \begin{equation}
D(P)=\min_{\bfQ_{\hat{X}|Y}}\left\{ \EEQ{d(X,\hat{X})}\;:\;d_{p}(p_{X},p_{\hat{X}})\leq P\right\}.
\end{equation}

For simplicity, let us first consider the TV distance as the perceptual index $d_p$, given by
\begin{flalign}
d_{TV}(\bfP_{X},\bfP_{\hat{X}}) &\triangleq\frac{1}{2}\sum_{x\in\mathcal{X}}|\bfP_{X}(x)-\bfP_{\hat{X}}(x)| \nonumber
\\
&=\sup_{A \subseteq \mathcal{X}}|p_{X}(A)-p_{\hat{X}}(A)|.
\label{eq:tv::def}
\end{flalign}
Note that using this definition, $d_{TV}(\bfP_{X},\bfP_{\hat{X}})\in[0,1]$, and $d_{TV}(\bfP_{X},\bfP_{\hat{X}})=0$ iff $\bfP_{X}=\bfP_{\hat{X}}$.
Now, 
\begin{flalign}
&D(P)=& \label{eq:DP::LinearOpt}
\\
\nonumber
&\min_{\bfQ\geq 0}\left\{ (\bfD^{\T}\bfP_{X,Y})\bullet \bfQ: \text{\ensuremath{\begin{array}{c}
\ensuremath{\boldsymbol{1}^{|\mathcal{X}|}\cdot \bfQ}=\boldsymbol{1}^{|\mathcal{Y}|}\\
d_{TV}(\bfP_{X},\bfQ\bfP_{Y})\leq P
\end{array}}} \!\! \right\} ,&
\end{flalign}
where the Frobenius inner product $A\bullet B=\tr{A^{\T}B}$,
$\boldsymbol{1}^{d}$ is the $1\times d$ dimensional all-ones vector, and 
for $\bfQ\in\mathbb{R}^{|\mathcal{X}|\times|\mathcal{Y}|}$ the constraint $\bfQ \geq 0$ is applied elementwise.
We start by presenting some elementary properties of \eqref{eq:DP::LinearOpt}.
\begin{proposition}
\label{prop::feasible}
Let $P\in[0,1]$. The optimization problem \eqref{eq:DP::LinearOpt}
is feasible (namely, the constraints are satisfiable), and its optimal value is bounded from below.
\end{proposition}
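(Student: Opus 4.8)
The plan is to exhibit an explicit feasible point and then invoke nonnegativity of the objective. For feasibility, the natural candidate is the \emph{posterior-matching} (or ``ideal'') estimator that makes $p_{\hat X}=p_X$ exactly, so that the perception constraint holds with $d_{TV}=0\le P$ for every $P\ge 0$. Concretely, take $\bfQ$ to be the rank-one stochastic matrix whose every column equals $\bfP_X$, i.e.\ $q(\hat x\mid y)=p_X(\hat x)$ for all $y$. Then $\bfQ\ge 0$ elementwise, each column sums to one so $\boldsymbol{1}^{|\mathcal X|}\bfQ=\boldsymbol{1}^{|\mathcal Y|}$, and $\bfP_{\hat X}=\bfQ\bfP_Y=\bfP_X\,(\boldsymbol 1^{|\mathcal Y|}\bfP_Y^{\T})=\bfP_X$ since $\bfP_Y$ is a probability vector. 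Hence $d_{TV}(\bfP_X,\bfP_{\hat X})=0\le P$, and the point is feasible; this settles the first claim for any $P\in[0,1]$ (indeed for any $P\ge 0$).

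For the lower bound on the optimal value, I would simply note that the objective $(\bfD^{\T}\bfP_{X,Y})\bullet\bfQ=\EEQ{d(X,\hat X)}$ is an expectation of the distortion $d(X,\hat X)$, which by assumption is valued in $\R^{+}\cup\{0\}$; thus the objective is $\ge 0$ on the entire feasible set (in fact on all of $\{\bfQ\ge 0\}$, since $\bfD$ and $\bfP_{X,Y}$ have nonnegative entries). Therefore $D(P)$ is bounded below by $0$. Combined with feasibility, the minimum in \eqref{eq:DP::LinearOpt} is attained (the feasible set is a nonempty compact polytope — closed by the linear/affine constraints, bounded since each column of $\bfQ$ lies in the probability simplex — and the objective is continuous), so $D(P)$ is well defined and finite.

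There is essentially no hard step here; the only thing to be slightly careful about is that the perception index $d_{TV}$ is a function of $\bfQ$ through the affine map $\bfQ\mapsto\bfQ\bfP_Y$, so the constraint $d_{TV}(\bfP_X,\bfQ\bfP_Y)\le P$ defines a convex (polyhedral) region; this is what guarantees the feasible set is a polytope and not merely some closed convex set, and it is what makes the later reduction to linear programming possible. I would state this observation in passing, since it is reused, but for the present proposition the one-line construction of the column-wise-$\bfP_X$ estimator together with nonnegativity of $\bfD$ suffices.
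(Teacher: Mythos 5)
Your proof is correct and follows essentially the same route as the paper: exhibit an explicit $\bfQ$ with $\bfQ\bfP_Y=\bfP_X$ (you use the constant estimator $q(\hat x\mid y)=p_X(\hat x)$, the paper uses posterior sampling $\bfQ=\bfP_{X|Y}$; either works) so the perception constraint holds with $d_{TV}=0$, and then bound the objective using nonnegativity of the distortion, matching the paper's bound $(\bfD^{\T}\bfP_{X,Y})\bullet\bfQ\ge\min_{x,\hat x}D_{x,\hat x}\ge 0$.
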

\begin{proof} 
The \textit{posterior sampling} solution $\bfQ=\bfP_{X|Y}=\{p_{X,Y}(x,y)/p_{Y}(y)\}_{x,y\in\mathcal{X}\mathcal{\times Y}}$
is feasible for every $P\geq 0$, since $\bfP_{\hat{X}}=\bfQ\bfP_{Y}=\bfP_{X}$, yielding $d_{TV}(\bfP_{\hat{X}},\bfP_{X})=0$. For every
stochastic matrix \bfQ,
\begin{equation}
(\bfD^{\T}\bfP_{X,Y})\bullet \bfQ\in \left[ \min D_{x,\hat{x}},\max D_{x,\hat{x}}\right],
\end{equation}
hence the optimal value is bounded.
\end{proof}
\begin{proposition}
\label{thm:Dstar}
Denote the matrix $\rho\triangleq \bfD^{\T}\bfP_{X,Y}$, whose entries are given by
$\rho_{\hat{x},y}=\bfP_Y(y)\EEb{d(X,\hat{x})|Y=y}$. 
Then, for any $P\geq 1$, 
\begin{equation}
D(P) = \sum_{y}\min_{\hat{x}\in\mathcal{X}}\rho_{\hat{x},y}\triangleq D^{*}.
\end{equation}
A corresponding optimal estimator is given by \begin{equation} \hat{X}^{*} (Y) \in\mathop{\mathrm{argmin}}_{\hat{x}}\rho_{\hat{x},Y}.
\end{equation}
Trivially, $D(P)\ge D^{*}$ holds for every $P\in[0,1]$.
\end{proposition}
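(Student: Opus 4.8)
The plan is to exploit the linear structure of the objective in \eqref{eq:DP::LinearOpt} together with the fact that the TV distance is bounded by $1$, so the whole statement reduces to an elementary decoupling argument plus the observation that the perception constraint becomes vacuous for $P\ge 1$.

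First I would rewrite the objective in terms of $\rho$: since $\EEQ{d(X,\hat{X})}=\tr{\bfP_{X,Y}^{\T}\bfD\bfQ}=(\bfD^{\T}\bfP_{X,Y})\bullet\bfQ=\rho\bullet\bfQ=\sum_{y}\sum_{\hat{x}}\rho_{\hat{x},y}\,q(\hat{x}|y)$. The key observation is that this sum decouples across the channel outputs $y$: for each fixed $y$, the column $(q(\hat{x}|y))_{\hat{x}\in\mathcal{X}}$ is a probability vector on $\mathcal{X}$, because $\bfQ$ is stochastic ($\bfQ\ge 0$ and $\boldsymbol{1}^{|\mathcal{X}|}\bfQ=\boldsymbol{1}^{|\mathcal{Y}|}$). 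Hence for every $y$ we have $\sum_{\hat{x}}\rho_{\hat{x},y}\,q(\hat{x}|y)\ge\min_{\hat{x}}\rho_{\hat{x},y}$, with equality exactly when $q(\cdot|y)$ is supported on $\argmin_{\hat{x}}\rho_{\hat{x},y}$.

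Summing over $y$ gives $\rho\bullet\bfQ\ge\sum_{y}\min_{\hat{x}}\rho_{\hat{x},y}=D^{*}$ for every stochastic $\bfQ$. Since for any $P$ the feasible set of \eqref{eq:DP::LinearOpt} is contained in the set of all stochastic matrices, this immediately yields the last claim of the proposition, $D(P)\ge D^{*}$ for every $P\in[0,1]$. For the matching upper bound when $P\ge 1$, I would note that the deterministic rule $\hat{X}^{*}(Y)\in\argmin_{\hat{x}}\rho_{\hat{x},Y}$ corresponds to a $\{0,1\}$-valued stochastic matrix $\bfQ^{*}$ attaining $\rho\bullet\bfQ^{*}=D^{*}$, and it is feasible because $d_{TV}(\bfP_{X},\bfP_{\hat{X}^{*}})\le 1\le P$ by \eqref{eq:tv::def}; more strongly, when $P\ge 1$ the perception constraint holds for every estimator, so $D(P)$ equals the unconstrained minimum $\min_{\bfQ\text{ stochastic}}\rho\bullet\bfQ=D^{*}$.

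There is no real obstacle here; the only two points requiring care are that the columns of a stochastic $\bfQ$ are \emph{unconstrained} probability vectors, which is what lets the per-$y$ minimization be carried out independently, and that the perception constraint is non-binding precisely because $d_{TV}\le 1$, which is where the hypothesis $P\ge 1$ enters.
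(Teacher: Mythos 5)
Your proof is correct and is precisely the ``straightforward'' argument the paper omits: the objective $\rho\bullet\bfQ$ decouples over the columns of $\bfQ$, each column being an unconstrained probability vector once the perception constraint becomes vacuous for $P\ge 1$ (since $d_{TV}\le 1$ always), and the per-column lower bound $\min_{\hat x}\rho_{\hat x,y}$ also gives $D(P)\ge D^*$ for all $P$. Nothing is missing.
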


\noindent The proof is straightforward.

\section{Linear Programming formulation}

We now observe that the perceptual constraint 
$\frac{1}{2} \sum_{x\in\mathcal{X}}|\bfP_{X}(x)-\sum_{y\in\mathcal{Y}}\bfP_{Y}(y)\bfQ(x|y)|\leq P$ 
in (\ref{eq:DP::LinearOpt}), is equivalent to the 
set of \textit{linear} constraints
\begin{equation}
\sum_{x\in\mathcal{X}}\pm\left(\bfP_{X}(x)-\sum_{y\in\mathcal{Y}}\bfP_{Y}(y)\bfQ(x|y)\right)\leq 2P.
\label{eq:dtv-2x-linear-constraints}
\end{equation}
Taking all possible sign combinations
we attain $2^{|\mathcal{X}|}$ linear constraints, where the $2$ constraints for which the signs 
 are either all positive or all negative are redundant since for probability vectors $\bfP_X,\bfP_Y$ and a stochastic matrix $\bfQ$ the LHS of \eqref{eq:dtv-2x-linear-constraints} vanishes. 
Together with \eqref{eq:DP::LinearOpt}, we can reformulate the DP function
as the following Linear Program (LP) \cite{bertsimas1997introduction,vanderbei2020linear}
\begin{flalign}
&D(P)=&\label{eq:DP::LPform} \\
&\min_{\bfQ \geq 0}\! \left\{  \rho \bullet \bfQ: \!\! \begin{array}{c}
\ensuremath{\boldsymbol{1}^{|\mathcal{X}|}\cdot  \bfQ}=\boldsymbol{1}^{|\mathcal{Y}|},  \  \bfQ \in \mathbb{R}^{|\mathcal{X}|\times|\mathcal{Y}|} \\
\sum_{x\in\mathcal{X}}\pm\left(\bfP_{X}(x)-\sum_{y\in\mathcal{Y}}\bfP_{Y}(y)\bfQ_{x|y} \right) \\ \leq 2P
\end{array} \!\! \right\}.&\nonumber
\end{flalign}
In (\ref{eq:DP::LPform}), we have $|\mathcal{X}|\times|\mathcal{Y}|$ variables (the entries of 
$\bfQ=\{q(\hat{x}|y)\}$), and 
 $|\mathcal{Y}|+2^{|\mathcal{X}|}-2$ constraints.
\subsection{Total Variation as a Wasserstein distance}
Let $\bfH=\{1-\delta_{x,\hat x}\}_{x,\hat x\in \X \times \X}$ be the \textit{Hamming} distance matrix, let $\mathscr{P}(\X)$ be the set of probability measures on $\X$, and let $\Pi\in\mathscr{P}(\X \times \X)$ be a coupling between $\bfP_{X}$ and $\bfP_{\hat{X}}$ (parameterized by a  matrix $\bf \Pi_{x, \hat x}$). It is well known \cite{van2014probability} that taking $\bfH$ as a metric on $\X$, the TV distance coincides with the\textit{ Wasserstein-$1$} distance on $\mathscr{P}(\X)$, namely
\begin{flalign}
d_{TV}(\bfP_{X},\bfP_{\hat{X}}) & = \inf_{\Pi}\Pi\left[x\neq\hat x\right] = W_{1,H}(\bfP_{X},\bfP_{\hat{X}}),&
\\
 W_{1,H}(\bfP_{X},\bfP_{\hat{X}}) &\triangleq \inf_{\bfPi}  \bfPi\bullet \bfH=\inf_{\bfPi} \sum_{x,\hat x} \bfPi_{x,\hat x} \bfH_{x,\hat x},&
\label{eq::TVasOT}
\end{flalign}
where the minimum is attained \cite[Lemma~3.4.1]{raginsky2013concentration}. Wasserstein distances are convex metrics on $\mathcal{P}(\X)$ \cite{ambrosio2008gradient}. Using \eqref{eq::TVasOT}, we can rewrite \eqref{eq:DP::LPform} as the linear problem
\begin{flalign}
&D(P)=& \label{eq:DP::OTform}
\\
\nonumber
&\min_{ \scriptsize \begin{array}{c}
\bfQ, \bfPi, \\ \varepsilon \geq 0
\end{array}
}\!\!\!\!
\left\{ \rho \bullet \bfQ:\!\!
\begin{array}{l}
\sum_{\hat x\in \X}\bfP_Y(y) \bfQ_{\hat x | y}=\bfP_Y(y), \forall y \in \Y\\
\sum_{\hat x\in \X}\bfPi_{x,\hat x }=\bfP_X(x), \forall x \in \X\\
\sum_{x\in \X}\bfPi_{x,\hat x }=\sum_{y \in \Y} \bfP_Y(y) \bfQ_{\hat x | y}, \forall \hat x \in \X
\\
\bfPi \bullet \bfH + \varepsilon =P,
\bfQ\in\mathbb{R}^{|\mathcal{X}|\times|\mathcal{Y}|},
\\
\bfPi \in\mathbb{R}^{|\mathcal{X}|\times|\mathcal{X}|}
\end{array}\!\!\!\!
\right\}& \!
\end{flalign}
where $\varepsilon$ is a slack variable.
 The problem \eqref{eq:DP::OTform} possesses $|\X |(|\Y| + |\X|)+1$ variables and only $|\Y| + 2|\X|+1$ constraints, from  which $|\Y| + 2|\X|$ are independent.

Interestingly, the form \eqref{eq:DP::OTform} allows to discuss a more general family of perceptual divergences $-$ Wasserstein-$1$ distances \eqref{eq::TVasOT} induced by arbitrary metrics $H$ on $\X$, which we will consider to be the case from this point on. We will assume \WLOG that $H$ takes values in $[0,1]$, hence the results of Propositions \ref{prop::feasible} and \ref{thm:Dstar} hold trivially in this case. 

\subsection{The Dual Problem}

Let the general linear programming problem \cite{bertsimas1997introduction}
\begin{equation}
{\rm (LP)} \quad
\min_q z^\T q, \,
\st  \ \bfA q= b ~\mathrm{and}~ q\ge 0 , 
\end{equation}
where
$q,z\in \mathbb{R}^n, b\in \mathbb{R}^{n_c}, \bfA\in \mathbb{R}^{n_c\times n}$,  
and the inequality is elementwise. Its dual problem (DLP) is given by
\begin{equation}
{\rm (DLP)} \quad
\max_w w^{\T}b, \, \st \  w^\T \bfA \leq z^\T.
\label{eq:dp::stndrd}
\end{equation}
Finally, recall that \textit{Strong duality} holds for feasible and bounded LP problems  \cite{bertsimas1997introduction}, namely, the problem \eqref{eq:dp::stndrd} is feasible and
\begin{equation}
\min_{q}z^\T q=\max_{w}w^{\T}b. \label{eq::DPdiscrete:strong duality}
\end{equation}
We next derive the dual form of \eqref{eq:DP::OTform}. For convenience, we split the variables in \eqref{eq:dp::stndrd}  into four groups: $|\mathcal{Y}|$ variables $\{w_y\}_{y\in \mathcal{Y}}$ related to the stochasticity constraint on $\bfQ$ for each symbol in $\mathcal{Y}$,  the two groups of ${|\mathcal{X}|}$ variables $\{r_x\}$ and $\{\nu_{\hat{x}}\}$ related to the constraints on the marginals of $\bfPi$, and the variable $l$ related to the perception constraint $\bfPi \bullet \bfH + \varepsilon =P$. We denote 
\begin{equation}
\bfrho'_{\hat{x},y} \triangleq \frac{\bfrho_{\hat{x},y}}{\bfP_Y({y})} =\EEb{d(X,\hat{x})|Y=y},
\label{eq:rho_prime}
\end{equation}
and explicitly write the dual problem of \eqref{eq:DP::OTform} as (see derivation in the Appendix),
\begin{flalign}
&\max_{w,r,\nu,l} \left[ \sum_{y\in \Y}p_{y}w_{y} + \sum_{x\in \X}p_{x}r_{x} -lP \right]&
\label{eq:DPdiscrete_explicit}
\\
\nonumber
&\st \, \left\{  l\geq 0,
\begin{array}{ll}
w_{y} \leq \rho_{\hat{x},y}'-\nu_{\hat{x}}, &
 \forall \hat{x},y\in\mathcal{X}\times\mathcal{Y}
 \\ 
 r_x \leq \bfH_{x, \hat x}l+\nu_{\hat x},\,&\forall x, \hat x \in \X \times \X 
\end{array}\right. .&
\end{flalign}
From the strong duality property, we have that \eqref{eq:DPdiscrete_explicit} is feasible; indeed, we can choose $w_y = \min_{\hat x \in \X} \rho'_{\hat x, y}$ and $r_x, \nu_{\hat x},l =0$. This choice of variables recovers the lower bound of Proposition~\ref{thm:Dstar}, where $D(P)\geq D^*$ for $P\in[0,1]$.

\begin{remark}
\label{rem::existence_vertices}
It is easy to see that in this case ${\rm rank}(\bfA) = |\Y| + 2|\X|$, while one constraint is redundant, namely we can eliminate a linear constraint from the primal program \eqref{eq:DP::OTform} (a row of $\bfA$) such that the \textit{row rank} of the problem is full. Equivalently, we can set one of the variables $r_x,\nu_{\hat x}$ to $0$, and the dual feasible set (projected onto $\R^{|\Y|+2|\X|}$) will not contain a line. This implies the existence of an extreme point in this dual set in $\R^{|\Y|+2|\X|}$ (see \cite[Thm. 2.6]{bertsimas1997introduction}).
\end{remark}

Given a value $P$, $D(P)$ can be calculated by numerically solving \eqref{eq:DP::OTform} (equivalently, \eqref{eq:DPdiscrete_explicit}). However,
finding a closed-form solution remains an open problem. In Section \ref{sec::binarychannel} we find such an expression for small alphabets.
We also observe that the objective of \eqref{eq:DPdiscrete_explicit}
 is linear in the perception index, hence the maximal value for a given $P$ is attained by some non-increasing linear function of the form $p_0+p_1P$. We further develop this insight below.

\section{Main results}
\subsection{Piecewise linearity of DP functions} 

\label{appsec::piecewise_linear}
\begin{figure*}[bt]
\center \includegraphics[viewport=75bp  40bp 1050bp 430bp, clip,width=.80\linewidth]{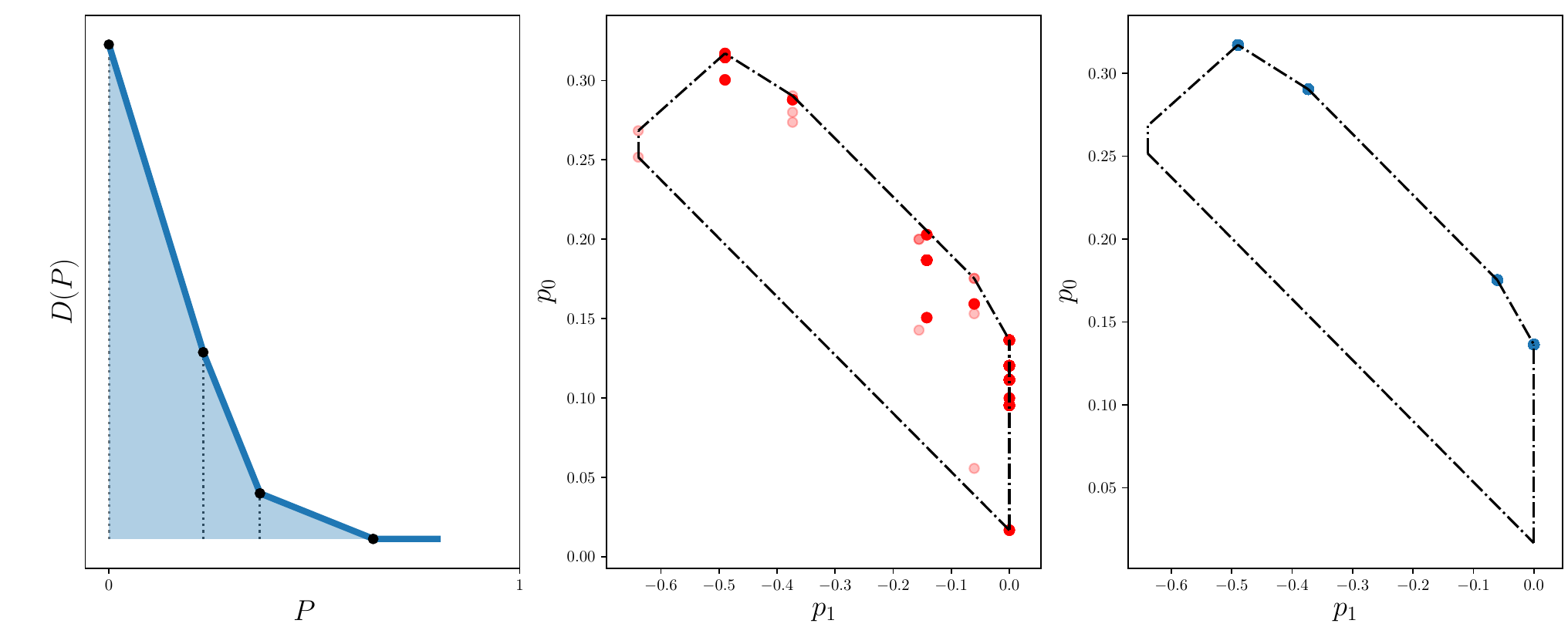}

\caption{\label{fig::convexhull}  {\bf Numerical illustration of Theorem~\ref{thm:piecewiseLinear} and Theorem~\ref{thm::r2convex}} for $|\X|=3, |\Y|=5$. In the {\bf (Middle)} pane we present the set $\mathcal{S}_2$ and its convex hull in the $(p_0,p_1)$-plane. The {\bf (Right)} pane shows the optimal solutions obtained by numerically solving \eqref{eq:DPdiscrete_explicit} for different values of $P$. We can see that the solutions, corresponding to the linear segments of $D(P)$ ({\bf Left} pane),  occur at extreme points of $\mathrm{conv}\left(\mathcal{S}_{2}\right)$.}
\end{figure*}
While the problem of finding an exact formula for $D(P)$ is still open, here we exploit the properties of the dual problem \eqref{eq:DPdiscrete_explicit}
 in order to show the general property that $D(P)$ is piecewise linear in the perception index $P$. 
 Moreover, the breakpoints and slopes of this function are determined by the vertices of a convex set in $\R^2$. 
 We will utilize the following  property of LP problems.
\begin{lemma}(\cite[Thm. 2.8]{bertsimas1997introduction})
\label{lemma:max_on_extreme}
For a bounded LP problem, if there exists an extreme point in the feasible set, then the optimal solution is obtained at an extreme point.  
\end{lemma}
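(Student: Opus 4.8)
The plan is a finite ``descent to a vertex'' argument, the classical route for this fact. Write the feasible set of the LP as a polyhedron $\mathcal{F}=\{x\in\R^{n}:a_{i}^{\T}x\ge b_{i},\ i\in I\}$ (the equality-form problem of the lemma, $\bfA q=b,\ q\ge0$, is of this type after splitting each equality into two inequalities and adjoining $q\ge0$), let $c$ be the cost vector, and assume the problem is bounded, i.e. $v^{\star}=\inf_{x\in\mathcal{F}}c^{\T}x>-\infty$, and that $\mathcal{F}$ has at least one extreme point. First I would recall the two standard structural facts to lean on, both of which are standard and can be cited from \cite{bertsimas1997introduction}: (a) a point $x\in\mathcal{F}$ is an extreme point iff the gradients $\{a_{i}:a_{i}^{\T}x=b_{i}\}$ of the constraints active at $x$ span $\R^{n}$; and (b) $\mathcal{F}$ contains an extreme point iff $\mathcal{F}$ contains no line.

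I would then establish the core claim: for every $x\in\mathcal{F}$ there is an extreme point $y$ of $\mathcal{F}$ with $c^{\T}y\le c^{\T}x$. If $x$ is an extreme point we are done; otherwise, by (a), the active gradients do not span $\R^{n}$, so there is $d\ne0$ with $a_{i}^{\T}d=0$ for every active $i$, and hence $x\pm\theta d\in\mathcal{F}$ for all sufficiently small $\theta>0$. Since $\mathcal{F}$ contains no line by (b), moving along $d$ and along $-d$ cannot both stay in $\mathcal{F}$ indefinitely. I would split into cases on the sign of $c^{\T}d$: if $c^{\T}d\ne0$, orient $d$ so that $c^{\T}d<0$; boundedness ($v^{\star}>-\infty$) forbids moving indefinitely in this cost-decreasing direction, so there is a largest step $\theta^{\star}>0$ at which some previously inactive constraint becomes tight, and $x'=x+\theta^{\star}d$ satisfies $c^{\T}x'<c^{\T}x$. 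If $c^{\T}d=0$, then by the no-line property at least one of $\pm d$ reaches a new tight constraint at a finite step $\theta^{\star}\ne0$, giving $x'$ with $c^{\T}x'=c^{\T}x$. In both cases the gradient $a_{j}$ of the newly tightened constraint is linearly independent of the previously active gradients, because $a_{j}^{\T}d\ne0$ while $a_{i}^{\T}d=0$ for every old active $i$; hence the rank of the active set strictly increases. Iterating, after at most $n$ rounds the active gradients span $\R^{n}$, i.e. we have reached an extreme point, with cost no larger than $c^{\T}x$.

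To conclude, $\mathcal{F}$ has only finitely many extreme points, since by (a) each is the unique solution of a subsystem obtained by picking $n$ linearly independent constraints from the $|I|$ available and setting them to equality; there are at most $\binom{|I|}{n}$ of these. Let $y^{\star}$ be an extreme point of least cost among this finite list. By the core claim, for any $x\in\mathcal{F}$ there is an extreme point of cost $\le c^{\T}x$, whence $c^{\T}y^{\star}\le c^{\T}x$; thus $y^{\star}$ attains the optimal value and is an extreme point, which is the assertion (and, as a byproduct, the infimum is attained, so ``bounded'' is in fact upgraded to ``admits an optimal solution'').

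The only delicate point — and hence the main obstacle — is the linear-algebra bookkeeping inside the descent step: verifying that each newly activated constraint contributes a gradient independent of those already active, so that the active rank strictly grows and the loop halts in at most $n$ rounds. Everything else (finiteness of the vertex set, the no-line characterization of existence of extreme points, and the reduction of the equality-constrained LP of the lemma to the inequality form above) is routine. A minor degenerate case worth a sentence is that a strictly positive step along $d$ is always available, because by construction $d$ leaves every constraint tight at $x$ unchanged and keeps the strictly satisfied ones strictly satisfied for small $\theta$, so feasibility is not lost immediately.
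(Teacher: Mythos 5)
Your proof is correct and is essentially the standard argument for \cite[Thm.~2.8]{bertsimas1997introduction}: descend from an arbitrary feasible point along a direction in the kernel of the active constraints, use boundedness and the no-line property to guarantee a finite step that strictly increases the rank of the active set, and then optimize over the finitely many extreme points. The paper itself offers no proof of this lemma (it is cited directly from the textbook), and your write-up matches the cited source's approach, including the key bookkeeping step that the newly activated gradient $a_j$ satisfies $a_j^{\T}d\neq 0$ and hence is independent of the previously active gradients.
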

\noindent This is true of course also for the dual problem. We now use this result to prove the following.

\begin{theorem}
\label{thm:piecewiseLinear}
For $P\in[0,\infty)$, the DP function 
\eqref{eq:DP::OTform}
is
a non-increasing piecewise linear function of $P$ with a non-decreasing
slope. Furthermore, there exists $P^{*}\in[0,1]$ such that $D(P)=D^{*},P\geq P^{*}$.
\end{theorem}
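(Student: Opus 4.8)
The plan is to work with the dual program \eqref{eq:DPdiscrete_explicit} and exploit the fact that its objective is linear in $P$, so that $D(P)$ is a pointwise maximum of a family of affine functions of $P$ indexed by the dual-feasible points. First I would observe that the dual feasible set $\mathcal{F}$ defined by the constraints in \eqref{eq:DPdiscrete_explicit} does \emph{not} depend on $P$ — only the objective $\sum_y p_y w_y + \sum_x p_x r_x - lP$ does. By Remark~\ref{rem::existence_vertices}, after eliminating the one redundant constraint (equivalently fixing one of the $\nu_{\hat x}$ or $r_x$ to $0$), the projected feasible set contains no line, hence possesses an extreme point. By strong duality the problem is feasible and bounded for every $P\in[0,\infty)$ (feasibility of the primal \eqref{eq:DP::OTform} holds by posterior sampling as in Proposition~\ref{prop::feasible}, boundedness by Proposition~\ref{thm:Dstar}), so Lemma~\ref{lemma:max_on_extreme} applies: for each fixed $P$ the maximum in \eqref{eq:DPdiscrete_explicit} is attained at one of the extreme points of $\mathcal{F}$.

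Next I would project each extreme point $v=(w,r,\nu,l)$ of $\mathcal{F}$ to the pair $(p_0,p_1)=\big(\sum_y p_y w_y + \sum_x p_x r_x,\,-l\big)\in\R^2$, collecting these images into a finite set — this is the set $\mathcal{S}_2$ appearing in Figure~\ref{fig::convexhull}. Then $D(P)=\max\{\,p_0 + p_1 P \;:\; (p_0,p_1)\in\mathcal{S}_2\,\}$, and since maximizing a linear functional over a finite set equals maximizing it over the convex hull with the max attained at a vertex of $\mathrm{conv}(\mathcal{S}_2)$, we get $D(P)=\max\{p_0+p_1P:(p_0,p_1)\in\mathrm{conv}(\mathcal{S}_2)\}$, which is the support-function description of a convex polytope in a variable direction $(1,P)$. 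A pointwise maximum of finitely many affine functions of $P$ is piecewise linear and convex in $P$; convexity gives the non-decreasing slope, and non-increasingness follows because $l\ge 0$ forces every slope $p_1=-l\le0$. The breakpoints are exactly the values of $P$ where the maximizing vertex of $\mathrm{conv}(\mathcal{S}_2)$ switches, i.e. they are determined by the vertices of that convex set, as claimed.

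For the final clause, I would argue as follows. Proposition~\ref{thm:Dstar} already gives $D(P)=D^*$ for all $P\ge1$ and $D(P)\ge D^*$ for $P\in[0,1]$; combined with monotonicity just established, $D$ is non-increasing and eventually constant equal to $D^*$. Define $P^* \triangleq \inf\{P\ge0 : D(P)=D^*\}$, which lies in $[0,1]$. Because $D$ is piecewise linear (hence continuous) and convex, the set $\{P : D(P)=D^*\}$ is a closed interval $[P^*,\infty)$, so $D(P)=D^*$ for all $P\ge P^*$ with $P^*\in[0,1]$, completing the proof.

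The main obstacle I anticipate is making the reduction to a \emph{finite} index set fully rigorous: one must be sure that the same finite collection of extreme points of the ($P$-independent) feasible set $\mathcal{F}$ serves simultaneously for all $P$, and that no extreme point relevant to some $P$ fails to be captured — this is exactly where Remark~\ref{rem::existence_vertices} (no line in the projected feasible set, guaranteeing extreme points exist) and the boundedness from Proposition~\ref{thm:Dstar} (guaranteeing the max is finite and attained, not escaping to infinity along a ray of $\mathcal{F}$) are needed. A secondary technical point is handling the unbounded directions (recession cone) of $\mathcal{F}$: one should check that for every $P\ge0$ the linear objective is bounded above on $\mathcal{F}$, so that Lemma~\ref{lemma:max_on_extreme} genuinely applies and the optimum is not attained "at infinity"; this again follows from strong duality with the primal, which is always feasible and bounded.
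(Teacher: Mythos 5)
Your proposal is correct and follows essentially the same route as the paper's proof: pass to the dual \eqref{eq:DPdiscrete_explicit}, note that its feasible set is independent of $P$ while the objective is affine in $P$, invoke Remark~\ref{rem::existence_vertices} and Lemma~\ref{lemma:max_on_extreme} to restrict to the finite set of extreme points, and conclude that $D(P)$ is a maximum of finitely many affine functions $p_0^i+p_1^iP$ with slopes $-l\le 0$. Your extra care with the final clause (via Proposition~\ref{thm:Dstar} and continuity) and the convex-hull projection merely anticipate Theorem~\ref{thm::r2convex}, so there is no substantive difference.
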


The proof is based on analyzing the dual formulation \eqref{eq:DPdiscrete_explicit}. Due to strong duality \eqref{eq::DPdiscrete:strong duality} this matches the primal problem. 
The feasible set of \eqref{eq:DPdiscrete_explicit} has a finite number of vertices, and this set is independent of the perceptual index $P$. The solution to \eqref{eq:DPdiscrete_explicit} must occur at one of these vertices. Thus, the interval $[0,1]$ may be partitioned into sub-intervals, so that in each sub-interval the solution to \eqref{eq:DPdiscrete_explicit} is at the same vertex. For a fixed choice of variables $w,r,\nu$ and $l$ in \eqref{eq:DPdiscrete_explicit}, the $D(P)$ function is linear with slope $-l$. Hence, the DP function is piecewise linear. Since DP functions are non-increasing and convex (see Thm.~\ref{appthm::the_DP_Blau}), the slope cannot decrease.

\begin{proof}[Proof of Theorem~\ref{thm:piecewiseLinear}]
Let $d=[0^{|\mathcal{Y}|+2|\X|},\ 1^{}]^{\T}$  and
$b_{0}=[\bfP_{Y}^{\T}, \bfP_{X}^{\T},0^{|\X| + 1}]$, both in $\R^{|\Y|+2|\X| +1}$. We can write the objective \eqref{eq:DPdiscrete_explicit}  
as \begin{equation}
[w,r,\nu,l]^{\T}b(P)\rightarrow\max_{w,r,\nu,l \in \mathcal{S} },\end{equation} where $b(P)=b_{0}-dP$ and $\mathcal{S}$ is the set of feasible solutions to the dual problem \eqref{eq:DPdiscrete_explicit} where we choose to set $\nu_{|\X|} \equiv 0$ (see Remark~\ref{rem::existence_vertices}). 
Let $\mathrm{ext}\left(\mathcal{S} \right) = \{p^{i}=[w^{i},r^{i},\nu^{i},l^{i}]\}$ denote
the vertices of $\mathcal{S}$ in $\R^{|\Y|+2|\X|}$. Note that the set of vertices is non-empty, finite, and importantly, independent of $P$. Lemma~\ref{lemma:max_on_extreme} above implies that the dual optimal value is obtained on this set. We now have from strong duality,
\begin{flalign}
D(P)&=\max_{i}p^{i}\cdot b(P) \nonumber \\ &=\max_{i}[w^{i},r^i,\nu^{i},l^i]^{\T}b(P) 
=\max_{i}\left[p_{0}^{i}+p_{1}^{i}P\right],
\label{eq::DPdiscrete:pcslinear}
\end{flalign}
where we denote the projections
\begin{align}
p_{0}^{i} & =p^{i}\cdot b_{0}, \label{eq::projrction0}\\
p_{1}^{i} & =-p^{i}\cdot d.
\label{eq::projrction1}\end{align}
As a maximum of finite set of linear functions, \eqref{eq::DPdiscrete:pcslinear}
 is a piecewise linear function. The non-decreasing slope property can be easily deduced from \eqref{eq::DPdiscrete:pcslinear}, or from the fact that DP functions are convex \cite{blau2018perception}.
\end{proof}

\begin{corollary}
The breakpoints of the $D(P)$ function lie within the set 
\begin{equation}
    \mathcal{P} =\left\{ {\frac{p_{0}^{i}-p_{0}^{j}}{p_{1}^{j}-p_{1}^{i}}:\,}   
    \begin{array}{c}
     p^i,p^j \textrm{ are vertices of the set of } \\ 
     \textrm{feasible
solutions to the dual problem}
\end{array}
\right\}.
\end{equation}
\end{corollary}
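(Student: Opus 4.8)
The plan is to derive the corollary directly from the piecewise-linear representation \eqref{eq::DPdiscrete:pcslinear} established in the proof of Theorem~\ref{thm:piecewiseLinear}. Recall that $D(P) = \max_i \left[ p_0^i + p_1^i P \right]$ is the upper envelope of a finite family of affine functions indexed by the vertices $p^i \in \mathrm{ext}(\mathcal{S})$, with $p_0^i, p_1^i$ given by \eqref{eq::projrction0}--\eqref{eq::projrction1}. A breakpoint of such an envelope is a point $P$ at which the index $i$ achieving the maximum changes, which forces two distinct affine pieces to be simultaneously active and equal there.

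First I would make the ``change of active piece'' statement precise: if $\bar P$ is a breakpoint, then by piecewise linearity there is an open interval to the left of $\bar P$ on which some piece $i$ is optimal and an open interval to the right on which some piece $j \neq i$ is optimal (their slopes differing, since otherwise $\bar P$ would not be a genuine breakpoint — here the non-decreasing slope property gives $p_1^i > p_1^j$). By continuity of $D(P)$, both pieces attain the value $D(\bar P)$ at $P = \bar P$, so $p_0^i + p_1^i \bar P = p_0^j + p_1^j \bar P$. Second, I would solve this scalar linear equation for $\bar P$: since $p_1^j - p_1^i \neq 0$, we get $\bar P = \frac{p_0^i - p_0^j}{p_1^j - p_1^i}$, which is exactly an element of the set $\mathcal{P}$. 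Hence every breakpoint lies in $\mathcal{P}$.

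I do not expect any serious obstacle here; the content is essentially bookkeeping on the upper envelope of finitely many lines. The only mild subtlety worth stating carefully is why a breakpoint necessarily corresponds to two \emph{vertices} of $\mathcal{S}$ rather than, say, a piece indexed by a non-extreme point: this is handled by Lemma~\ref{lemma:max_on_extreme}, which guarantees that for each $P$ the optimum of the bounded dual LP is attained at a vertex, so on each linear segment of $D(P)$ we may take the active piece to be (the projection of) some $p^i \in \mathrm{ext}(\mathcal{S})$, and these are finite in number and $P$-independent. One should also note that $\mathcal{P}$ is a superset of the breakpoints, not necessarily equal to it — some pairs $(i,j)$ will give values of $\bar P$ where neither piece is actually on the envelope — which matches the phrasing ``lie within the set''. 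Optionally, I would remark that since the relevant $P$ range is $[0,1]$ (by Theorem~\ref{thm:piecewiseLinear}, $D(P) = D^*$ for $P \geq P^*$), one may further intersect $\mathcal{P}$ with $[0,1]$, but this refinement is not needed for the stated claim.
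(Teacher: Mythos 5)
Your proposal is correct and follows exactly the route the paper intends: the corollary is an immediate consequence of the representation $D(P)=\max_i\left[p_0^i+p_1^iP\right]$ from the proof of Theorem~\ref{thm:piecewiseLinear}, with a breakpoint forcing two distinct affine pieces (indexed by vertices of $\mathcal{S}$, via Lemma~\ref{lemma:max_on_extreme}) to coincide, which solves to $\bar P=\frac{p_0^i-p_0^j}{p_1^j-p_1^i}$. The paper leaves this as an unstated one-liner, and your write-up correctly fills in the bookkeeping, including the observation that $\mathcal{P}$ is only a superset of the breakpoints.
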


As we show next, not every vertex is a candidate for optimality in \eqref{eq::DPdiscrete:pcslinear}; optimal solutions must be obtained on a 2-D convex hull. 
Using the notations of Theorem~\ref{thm:piecewiseLinear} proof, denote the set $\mathcal{S}_{2}=\left\{ \left(p_{0}^{i},p_{1}^{i}\right):p_{0}^{i}=p^{i}\cdot b_{0},p_{1}^{i}=-p^{i}\cdot d,p^{i}\in \mathrm{ext}\left(\mathcal{S}\right)\right\} \subseteq\mathbb{R}^{2}$
which represents the (finite) set of linear curves $\left\{ p_{0}^{i}+p_{1}^{i}P, p^i \in \mathrm{ext}\left(\mathcal{S}\right) \right\} $ on the $2$-dimensional plane by the projections of their corresponding  vertices \eqref{eq::projrction0}-\eqref{eq::projrction1}.

\begin{theorem}
\label{thm::r2convex}
For any $P\geq 0$, there exists a vertex of $\mathcal{S}$ such that \textbf{$p^{k}\in\argmax_{p \in \mathrm{ext}\left(\mathcal{S}\right) }p^{}\cdot b(P)$},
and $\left(p_{0}^{k},p_{1}^{k}\right)$ is an extreme point of $\mathrm{conv}\left(\mathcal{S}_{2}\right)$.
\end{theorem}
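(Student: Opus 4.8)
The plan is to show that the $P$-dependent linear functional $p \mapsto p\cdot b(P)$, when restricted to $\mathrm{ext}(\mathcal{S})$, is maximized at a vertex whose projection $(p_0,p_1)$ is extreme in $\mathrm{conv}(\mathcal{S}_2)$. First I would observe that, by definition of the projections \eqref{eq::projrction0}--\eqref{eq::projrction1}, for every $p^i \in \mathrm{ext}(\mathcal{S})$ we have $p^i\cdot b(P) = p^i\cdot b_0 - P\, p^i\cdot d = p_0^i + p_1^i P$. Hence the value of the dual objective at a vertex depends on that vertex only through its image $(p_0^i,p_1^i)\in\mathcal{S}_2$. Consequently, $\max_{p\in\mathrm{ext}(\mathcal{S})} p\cdot b(P) = \max_{(p_0,p_1)\in\mathcal{S}_2}\big(p_0 + p_1 P\big)$, and by Theorem~\ref{thm:piecewiseLinear} (or just finiteness of $\mathcal{S}_2$) this maximum is attained.

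Next I would pass from $\mathcal{S}_2$ to its convex hull: since $p_0 + p_1 P = (1,P)\cdot(p_0,p_1)$ is a linear functional of $(p_0,p_1)$, maximizing it over the finite set $\mathcal{S}_2$ gives the same value as maximizing it over $\mathrm{conv}(\mathcal{S}_2)$, because the maximum of a linear functional over a polytope is attained at an extreme point, and every extreme point of $\mathrm{conv}(\mathcal{S}_2)$ is a point of $\mathcal{S}_2$ (a standard fact: the extreme points of the convex hull of a finite set are a subset of that set). Therefore there is a point $(p_0^k,p_1^k)\in\mathcal{S}_2$ that is both an extreme point of $\mathrm{conv}(\mathcal{S}_2)$ and a maximizer of $p_0 + p_1 P$ over $\mathcal{S}_2$. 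Pulling this back, any vertex $p^k\in\mathrm{ext}(\mathcal{S})$ whose projection equals $(p_0^k,p_1^k)$ satisfies $p^k\in\argmax_{p\in\mathrm{ext}(\mathcal{S})}p\cdot b(P)$ and has its projection extreme in $\mathrm{conv}(\mathcal{S}_2)$, which is exactly the claim. One should note $\mathcal{S}_2$ is nonempty and finite (Remark~\ref{rem::existence_vertices} and Theorem~\ref{thm:piecewiseLinear} guarantee $\mathrm{ext}(\mathcal{S})\neq\emptyset$ and finite), so $\mathrm{conv}(\mathcal{S}_2)$ is a nonempty polytope and its extreme points exist.

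The only subtlety, and the step I would be most careful about, is the last pull-back: there may be several vertices of $\mathcal{S}$ mapping to the same point of $\mathcal{S}_2$, so "extreme point of $\mathrm{conv}(\mathcal{S}_2)$" is a statement about the projection, not about $p^k$ itself being extreme in some higher-dimensional hull — but the theorem only asserts the former, so this causes no difficulty; I would simply phrase it as: pick any preimage. A second minor point worth spelling out is why an extreme point of $\mathrm{conv}(\mathcal{S}_2)$ lies in $\mathcal{S}_2$: if $x\in\mathrm{conv}(\mathcal{S}_2)$ is extreme, write $x$ as a convex combination of points of $\mathcal{S}_2$; extremality forces all weight on a single $\mathcal{S}_2$-point. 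With these observations in place the proof is essentially a two-line reduction to the elementary fact that linear functionals on polytopes attain their optima at vertices, together with bookkeeping on the projection map $p^i\mapsto(p_0^i,p_1^i)$.
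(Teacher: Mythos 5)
Your proof is correct and follows essentially the same route as the paper: both arguments note that the dual objective at a vertex depends only on its projection $(p_0^i,p_1^i)$, and then invoke the fact that a linear functional over the convex hull of the finite set $\mathcal{S}_2$ is maximized at an extreme point of that hull, which belongs to $\mathcal{S}_2$ and pulls back to a vertex of $\mathcal{S}$. Your extra care about the pull-back and about extreme points of the hull of a finite set lying in the set itself is sound but not a different method.
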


\begin{proof}
Let \textbf{$\left\{ \left(\widetilde{p}_{0}^{k},\widetilde{p}_{1}^{k}\right)\right\} _{k=1}^{M}\subseteq\mathcal{S}_{2}$
}be the set of extremals of $\mathrm{conv}\left(\mathcal{S}_{2}\right)$. The set $\mathcal{S}_{2}$ is finite, hence its convex hull is bounded. We
can write any point in $\mathcal{S}_{2}$  as a convex combination
$\left(p_{0}^{i},p_{1}^{i}\right)=\sum_{k=1}^{M}\alpha_{ik}\left(\widetilde{p}_{0}^{k},\widetilde{p}_{1}^{k}\right)$,
thus we have 
\begin{flalign}
\nonumber p^{i}\cdot b(P)&=p_{0}^{i}+p_{1}^{i}P=\sum_{k=1}^{M}\alpha_{ik}\left(\widetilde{p}_{0}^{k}+\widetilde{p}_{1}^{k}P\right) &
\\
&\leq\max_{k}\left(\widetilde{p}_{0}^{k}+\widetilde{p}_{1}^{k}P\right)=\max_{k}{p}^{k}\cdot b(P).& 
\end{flalign}
\end{proof}

The results of Theorems~\ref{thm:piecewiseLinear} and~\ref{thm::r2convex} are illustrated in Fig.~\ref{fig::convexhull} for alphabet sizes  $|\X|=3$ and $ |\Y|=5$, where we considered the TV distance and distortion given by a random matrix $\bfD$. We numerically solve \eqref{eq:DPdiscrete_explicit} for different values of $P$ along the DP tradeoff and project the optimal solutions according to \eqref{eq::projrction0}-\eqref{eq::projrction1}. We also calculate the extreme points of the feasible set to obtain $\mathcal{S}_2$ (for a discussion about finding the vertices of a feasible set, we refer the reader to \cite[Sec.~2.2]{bertsimas1997introduction}). It can be seen that optimal solutions to \eqref{eq:DPdiscrete_explicit} correspond to the linear segments of the DP function, and are obtained on extreme points of $\mathrm{conv}\left(\mathcal{S}_{2}\right)$ in the $(p_0,p_1)$-plane.

\subsection{Full characterization of channels with binary sources}

\label{sec::binarychannel}

We next focus on the case of binary sources, where $\X = \{x_1,x_2\}$ 
with probabilities $p_{x_1},p_{x_2}$, respectively, 
and $\Y$ is of arbitrary size $n_y$. 
It suffices to analyze the TV distance \eqref{eq:tv::def} as the perceptual index, since every metric defining the Wasserstein-$1$ distance is proportional to the Hamming distance in the binary case. The distortion matrix is arbitrary, yielding the matrix $\rho'$ defined in \eqref{eq:rho_prime}.
Denote 
    $u_y = \hlf (\rho'_{\hat{x}_1y}-\rho'_{\hat{x}_2y})$
which is half the cost of reconstructing $y$ as $x_1$ over reconstructing as $x_2$, and assume \WLOG that
    $u_{y_1} \leq u_{y_2}\leq \ldots \leq u_{y_n}$. 
 We define $P_Y^-(u)={\rm Pr}\{u_Y \leq u\}=\sum_{y:u_y\leq u}\bfP_Y(y)$, which is right-continuous with left limit 
$P_Y^-(u^-)={\rm Pr}\{u_Y < u\}=\sum_{y:u_y< u}\bfP_Y(y)$.
We further denote the symbols $y^*_i$ whose $u_y$ is non-zero, namely 
\begin{eqnarray}
0 = u_0 < u_1=u_{y^*_1}\leq \ldots
    \leq u_{M^+} = u_{y^*_{M^+}},
    \\
   u_{-{M^-}} = u_{y^*_{-M^-}}\leq \ldots \leq u_{-1}=u_{y^*_{-1}}< 0 = u_0.
\end{eqnarray}
\begin{theorem}
\label{thm::DP:binarysource}
Assume that $p_{x_1}\geq P_Y^-(0)$, and let $I=\max\{i\colon p_{x_1} \geq P_Y^-(u_i) \}$. Then, the DP function $D(P)$ is piecewise linear with breakpoints $\{P^*_i\}_{i=0}^I$ given by 
\begin{equation}
\label{eq:Pibrakpoint}
P^*_i = p_{x_1}-P_Y^-(u_i)
\end{equation}
where, specifically, $P_0^*=p_{x_1}-P_Y^-(0)=P^*$. 
The DP function is then given by 
\begin{equation}
D(P)=\begin{cases}
D^{*}, & P\geq P_{0}^{*}\\
D(P_{i-1}^{*})+2u_{i}\left(P_{i-1}^{*}-P\right), & P_{i}^{*}\leq P\leq P_{i-1}^{*}\\
D(P_{I}^{*})+2u_{I+1}\left(P_{I}^{*}-P\right), & 0\leq P\leq P_{I}^{*}
\end{cases}.
\end{equation}
If $P_Y^-(0^-) \geq p_{x_1}$, then similarly $P^*_0=P_Y^-(0^-)-p_{x_1}$, and $P^*_i=P_Y^-(u_{-i-1})-p_{x_1}$, while it is non-negative,
and $D(P)$ is determined analogously.
In the case $P_Y^-(0) \geq p_{x_1} \geq  P_Y^-(0^-)$, $P^*=0$ and $D(P)\equiv D^*$ for all $P\geq0$.
\end{theorem}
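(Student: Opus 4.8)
\textbf{Proof proposal for Theorem~\ref{thm::DP:binarysource}.}

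The plan is to specialize the dual program \eqref{eq:DPdiscrete_explicit} to the binary source and solve it in closed form, using the piecewise-linear structure already guaranteed by Theorem~\ref{thm:piecewiseLinear}. With $\X=\{x_1,x_2\}$ and $\bfH$ proportional to Hamming (so we may take $\bfH_{x,\hat x}=1-\delta_{x,\hat x}$), the variables $r_{x_1},r_{x_2},\nu_{x_1},\nu_{x_2},l$ and $\{w_y\}$ satisfy the constraints $w_y\le \rho'_{\hat x,y}-\nu_{\hat x}$ and $r_x\le \bfH_{x,\hat x}l+\nu_{\hat x}$. First I would use Remark~\ref{rem::existence_vertices} to fix $\nu_{x_2}\equiv 0$, so the only free $\nu$ is $\nu\triangleq\nu_{x_1}$; the $r$-constraints then force $r_{x_1}\le\min\{\nu,\,l\}$ and $r_{x_2}\le\min\{0,\,\nu+l\}$, and since the objective has positive coefficients $p_{x_1},p_{x_2}$ on $r_{x_1},r_{x_2}$ we may set them to these maxima. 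Likewise $w_y$ should be pushed to $w_y=\min\{\rho'_{x_1,y}-\nu,\ \rho'_{x_2,y}\}$. Writing $\rho'_{x_1,y}-\nu=\rho'_{x_2,y}+2u_y-\nu$, this becomes $w_y=\rho'_{x_2,y}+\min\{2u_y-\nu,\,0\}$, so the dual objective reduces to a univariate concave piecewise-linear function of the scalar $\nu$ (for each fixed $l\ge 0$), plus the term $-lP$ which couples in through $r_{x_1},r_{x_2}$.

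Next I would carry out the two-stage maximization. Summing over $y$, the $w$-contribution is $\sum_y p_y\rho'_{x_2,y}+\sum_y p_y\min\{2u_y-\nu,0\}=D^{\text{(const)}}-\sum_{y:\,2u_y>\nu}p_y(2u_y-\nu)$ — wait, the sign: the contribution is $\sum_y p_y \min\{2u_y-\nu,0\}$, which equals $-\sum_{y:\,2u_y<\nu}p_y(\nu-2u_y)$. Adding $p_{x_1}\min\{\nu,l\}+p_{x_2}\min\{0,\nu+l\}$ and $-lP$, I get an explicit concave function $g(\nu,l)$ whose breakpoints in $\nu$ occur exactly at the values $\nu=2u_y$ and $\nu=l$, $\nu=-l$. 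Optimizing over $\nu$ for fixed $l$, then over $l\ge 0$, the optimal $\nu$ will land at one of these breakpoints; the condition $p_{x_1}\ge P_Y^-(u_i)$ (equivalently $\sum_{y:u_y\le u_i}p_y\le p_{x_1}$) is precisely what determines which breakpoint is active, i.e. where the subgradient of $g$ in $\nu$ changes sign. This is where the quantity $I=\max\{i:p_{x_1}\ge P_Y^-(u_i)\}$ and the breakpoints $P^*_i=p_{x_1}-P_Y^-(u_i)$ emerge: for $P$ in the slab between $P^*_i$ and $P^*_{i-1}$, the optimal $(\nu,l)$ is the vertex with $l$ such that $-l=-2u_i$, giving slope $2u_i$, and the value is pinned down by continuity at the breakpoints together with $D(P^*_0)=D^*$ from Proposition~\ref{thm:Dstar}.

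I would then close the argument by verifying the three regimes. For $P\ge P^*_0$: at $\nu=0,l=0$ the objective is $\sum_y p_y\min\{\rho'_{x_1,y},\rho'_{x_2,y}\}=D^*$, and one checks no $(\nu,l)$ improves it when $P\ge P^*_0$, recovering $D(P)=D^*$; this also identifies $P^*=P^*_0$. For the generic slabs $P^*_i\le P\le P^*_{i-1}$, evaluating $g$ at the active vertex gives the affine form $D(P^*_{i-1})+2u_i(P^*_{i-1}-P)$; the tail slab $0\le P\le P^*_I$ uses $u_{I+1}$ since no further breakpoint $P^*_{I+1}$ is admissible ($p_{x_1}<P_Y^-(u_{I+1})$). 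The symmetric case $P_Y^-(0^-)\ge p_{x_1}$ follows by swapping the roles of $x_1,x_2$ (equivalently, fixing $\nu_{x_1}\equiv 0$ instead), which flips the signs of the $u_y$ and yields $P^*_i=P_Y^-(u_{-i-1})-p_{x_1}$; the degenerate case $P_Y^-(0)\ge p_{x_1}\ge P_Y^-(0^-)$ makes $P^*_0=0$ so $D\equiv D^*$ on $[0,\infty)$. The main obstacle I anticipate is bookkeeping the nested $\min\{\cdot,\cdot\}$ terms coming from the $r$-constraints cleanly enough to see that the optimal $l$ is forced to equal some $2u_i$ (rather than an interior value) and to get the value at each vertex right; handling the boundary/degenerate alignments of the $u_y$ (ties, and $u_y=0$ symbols) and confirming $D(P^*_i)$ matches by continuity — as opposed to recomputing it from scratch at each vertex — is the delicate part, but it is exactly the kind of one-dimensional concave-maximization bookkeeping that the sorting assumption $u_{y_1}\le\cdots\le u_{y_n}$ is designed to make routine.
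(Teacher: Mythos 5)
Your proposal is correct and follows essentially the same route as the paper: specialize the dual to the binary case, reduce it to a low-dimensional concave piecewise-linear maximization, and locate the breakpoints where the coefficient $p_{x_1}-P_Y^-(\cdot)$ changes sign. The only cosmetic difference is that you start from the Wasserstein-form dual \eqref{eq:DPdiscrete_explicit} with separate variables $(\nu,l)$, whereas the paper works from the TV-form dual in the appendix, which collapses immediately to a single scalar $u$ (your optimal $(\nu,l)$ correspond to the paper's $(2u,2|u|)$), so your two-stage optimization is an equivalent but slightly heavier parametrization of the same argument.
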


\begin{remark}
\label{rem:degenerate_u}
If $u_{i}=u_{i-1}$ then $P^*_{i}=P^*_{i-1}$ and this yields a `degenerate' interval.
If $u_{i} > u_{i-1}$, then \eqref{eq:Pibrakpoint} can alternatively be written more simply as $P^*_i=P^*_{i-1}-\bfP_Y(y^*_{i})$.
\end{remark}

The results of Theorem \ref{thm::DP:binarysource} are illustrated in Fig.~\ref{fig:dp}. These results reassure the intuition that channel outputs in $\Y$ should be mapped to symbols in $\{x_1,x_2\}$ in a \textit{greedy} fashion;
At the point $P=1$, each $y$ is reconstructed with a minimal penalty, without any perceptual constraints (as in Proposition~\ref{thm:Dstar}). This can be done by setting, \textit{e.g.}, $q(\hat{x}_1|y) = \delta_{u_y\leq0}$ . At the point $P=P^*$, $y$'s are still reconstructed optimally, but now under a  perception constraint. This can be obtained by rearranging the mapping of symbols whose $u_y=0$, which yields no extra cost in distortion.
Now, suppose that $x_1$ is not `fully allocated', that is, $p_{x_1} \geq P^-_Y(0)$. As the perception constraint becomes more restrictive (lower $P$), the estimator will seek for the minimal cost symbols $y\in \Y$ that are mapped to $x_1$ with probability less than $1$, and increase this probability. For a small change of $\Delta P$, the cost in distortion is  $2u_y \Delta P$. This is done until $P=0$ is met, namely $p_{\hat{x}_1}=p_{x_1}$.

\begin{corollary}
    At the breakpoints  where $P^*_i\neq 0$, an optimal estimator is given by a \textit{deterministic} rule $Q_{P^*_i}$ (for $p_{x_1}\geq P_Y^-(0)$, given by   $Q_{P^*_i}=\left\{q(x_1|y)=\delta_{u_y\leq u_i}\right\}$). Interestingly, at $P \in \left[P_i^*, P_{i-1}^* \right]$, the estimator is given by the \textit{convex} combination of estimators at the interval edges, $Q_P = \alpha Q_{P_{i-1}^*}+(1-\alpha)Q_{ P_{i}^*}$, with $\alpha = \frac{P-P_i^*}{P_{i-1}^* -P_i^*}$.
\end{corollary}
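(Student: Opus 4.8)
The plan is to reduce the binary case to the general LP/dual analysis developed in Sections IV–V, and then solve the resulting low-dimensional dual explicitly. Since $|\X|=2$, the perceptual constraint in \eqref{eq:DP::LPform} collapses to exactly two non-redundant linear inequalities (the sign patterns $(+,-)$ and $(-,+)$), which by symmetry amount to the single constraint $|p_{x_1} - p_{\hat x_1}| \le P$, i.e. $|p_{x_1} - \sum_y \bfP_Y(y) q(x_1|y)| \le P$. So the primal is: minimize $\sum_y \bfP_Y(y)\big(q(x_1|y)\rho'_{x_1,y} + q(x_2|y)\rho'_{x_2,y}\big)$ over stochastic $\bfQ$ subject to that one constraint. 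Writing $t_y := q(x_1|y) \in [0,1]$, the objective becomes $\sum_y \bfP_Y(y)\big(\rho'_{x_2,y} + t_y(\rho'_{x_1,y}-\rho'_{x_2,y})\big) = \mathrm{const} + 2\sum_y \bfP_Y(y)\, u_y\, t_y$, and the constraint is $|\,p_{x_1} - \sum_y \bfP_Y(y) t_y\,| \le P$. This is now a box-constrained LP in the scalars $t_y$ with a single aggregate linear constraint.

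**Next I would solve this explicitly by a greedy/exchange argument.** Fix the target mass $m := \sum_y \bfP_Y(y) t_y$. For a given $m$, minimizing $\sum_y \bfP_Y(y) u_y t_y$ subject to $t_y\in[0,1]$ and $\sum \bfP_Y(y) t_y = m$ is a fractional knapsack: put as much mass as possible on the smallest $u_y$'s. Concretely the optimal $t_y$ is $1$ for $u_y$ below a threshold, $0$ above it, and fractional at the threshold, giving a value $g(m)$ that is convex, piecewise linear in $m$ with breakpoints at the partial sums $P_Y^-(u_i)$ and slopes $2u_i$. Then $D(P) = \mathrm{const} + \min\{ g(m) : |p_{x_1}-m|\le P\}$. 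Because the unconstrained minimizer of $g$ over $[0,1]$ is $m = P_Y^-(0^-)$ (mass on all strictly-negative $u_y$) through $m = P_Y^-(0)$, and $g$ is convex, the optimal $m$ for given $P$ is the point in $[p_{x_1}-P, p_{x_1}+P]$ closest to that minimizing set. Under the hypothesis $p_{x_1}\ge P_Y^-(0)$, the minimizer lies at or below $p_{x_1}$, so the active constraint is $m = p_{x_1}-P$ (as long as $P \le p_{x_1}-P_Y^-(0) = P^*$; for $P\ge P^*$ we can set $m = p_{x_1}$ and get $D^*$). Substituting $m = p_{x_1}-P$ into the piecewise-linear $g$ and reading off breakpoints gives exactly $P^*_i = p_{x_1}-P_Y^-(u_i)$ and slopes $2u_i$, i.e. the claimed formula; the three cases ($p_{x_1}\ge P_Y^-(0)$, $p_{x_1}\le P_Y^-(0^-)$, in between) correspond to the minimizing set of $g$ lying below, above, or straddling $p_{x_1}$. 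The corollary about deterministic estimators at breakpoints and their convex combinations in between follows because at $m = P_Y^-(u_i)$ the knapsack solution is the threshold rule $q(x_1|y)=\delta_{u_y\le u_i}$ with no fractional coordinate, and on an interval $[P^*_i,P^*_{i-1}]$ the optimal $m$ interpolates linearly, so the optimal $t_y$ vector does too.

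**I would double-check the reduction through the dual** (to stay consistent with the paper's narrative and with Theorem~\ref{thm:piecewiseLinear}): the dual \eqref{eq:DPdiscrete_explicit} with $|\X|=2$ has essentially three free scalars after the normalization $\nu_{x_2}\equiv 0$ — say $\nu := \nu_{x_1}$, $r_{x_1}$, $r_{x_2}$, $l\ge 0$, and the $w_y$'s are forced to $w_y = \min_{\hat x}(\rho'_{\hat x,y}-\nu_{\hat x})$ at optimum. One can then verify that the vertices of $\mathcal S$, projected via \eqref{eq::projrction0}–\eqref{eq::projrction1}, give the pairs $(p_0^i,p_1^i)$ with $p_1^i = -l$ taking the values $-2u_i$ and $p_0^i$ matching $D(P^*_i)+2u_i P^*_i$, so that $\max_i(p_0^i + p_1^i P)$ reproduces the stated $D(P)$. **The main obstacle** I anticipate is purely bookkeeping: handling the degenerate ties $u_i = u_{i-1}$ (Remark~\ref{rem:degenerate_u}), the boundary symbols with $u_y = 0$ (which may be reshuffled freely at $P = P^*$ without distortion cost), and the three regimes of the theorem in a unified way — the mathematics is a one-dimensional convex piecewise-linear minimization, but getting the indexing of $y^*_i$, $M^+$, $M^-$, and the half-open/closed behavior of $P_Y^-$ exactly right requires care. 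There is no deep difficulty; the substance is the knapsack/exchange observation plus convexity of $g$.
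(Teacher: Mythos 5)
Your argument is correct, but it takes a genuinely different route from the paper. The paper proves Theorem~\ref{thm::DP:binarysource} entirely through the dual: after normalizing, the dual of \eqref{eq:DP::OTform} collapses to maximizing a concave piecewise-linear function $J_P(u)$ of the single scalar $u=\nu_1-\nu_2$, whose candidate maximizers are $\{0,u_{y_1},\dots,u_{y_n}\}$, and the breakpoints $P_i^*$ are read off from where two candidates tie; the corollary itself is then stated without an explicit proof, being justified only by the informal ``greedy'' discussion. You instead stay in the primal: setting $t_y=q(x_1|y)$ reduces the problem to a fractional knapsack with a single aggregate constraint $|p_{x_1}-\sum_y \bfP_Y(y)t_y|\le P$, the value function $g(m)$ is convex piecewise linear with kinks at the partial sums $P_Y^-(u_i)$ and slopes $2u_i$, and substituting the active constraint $m=p_{x_1}-P$ recovers the theorem. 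For the corollary specifically your route is the more natural one, since the claim is about the structure of the primal optimizer: the threshold rule $q(x_1|y)=\delta_{u_y\le u_i}$ is exactly the knapsack solution at $m=P_Y^-(u_i)$ (no fractional coordinate, hence deterministic --- and your argument also explains the caveat $P_i^*\neq 0$, since at $P=0$ the forced mass $m=p_{x_1}$ need not hit a kink of $g$), while on $[P_i^*,P_{i-1}^*]$ the optimal mass interpolates linearly and linearity of the objective in $\bfQ$ makes the interpolated estimator optimal; a quick check confirms your $\alpha$ gives the correct mass $p_{x_1}-P$. The dual approach buys consistency with the general machinery of Theorems~\ref{thm:piecewiseLinear} and~\ref{thm::r2convex}; your primal approach buys a direct, constructive identification of the optimal estimators, which is precisely what the corollary asserts. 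The remaining work you flag (ties $u_i=u_{i-1}$, symbols with $u_y=0$, the three regimes) is indeed only bookkeeping, matching Remark~\ref{rem:degenerate_u}.
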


\noindent This result implies that in order to construct an estimator for any point along the tradeoff at test time, without any additional calculations, it is sufficient to calculate $\mathcal{O} (|\Y|)$ estimators beforehand, one at each breakpoint (and at $P=0$).

\section*{ Acknowledgements}
The research of NW was partially  supported by the Israel Science Foundation (ISF), grant no. 1782/22.
The work of RM was partially supported by the Skillman chair in biomedical sciences and by 
the Ollendorff Minerva Center, ECE Faculty, Technion.

\balance
\bibliographystyle{IEEEtran}
\bibliography{dp,dp_chapt}
\balance

\balance
\clearpage

\appendix
In this Appendix, we start with an extended review of Linear Programs and their Dual Problems. We derive the dual forms of both formulations  \eqref{eq:DP::LPform} and \eqref{eq:DP::OTform} (Eq. \eqref{eq:DPdiscrete_explicit}). We then provide a detailed proof for Theorem \ref{thm::DP:binarysource} in the text.

\subsection{The linear optimization problem and strong duality}

Let the general Linear Programming (LP) problem \cite{bertsimas1997introduction}
\begin{equation}
\begin{cases}
\rho\bullet \bfQ & \rightarrow\min_\bfQ\\
\st & a_{i}\bullet \bfQ=b_{i},\,i\in M_{1}~.\\
 & s_{i}\bullet \bfQ\leq b_{i},\,i\in M_{2}\\
 & \bfQ\geq0
\end{cases}
\label{appeq:dp::stndrd_primal}
\end{equation}
$\bfQ,\rho,a_i$ are real $|\mathcal{X}|\times|\mathcal{Y}|$ matrices, $b=\{b_{i}\}_{i\in M}\in\mathbb{R}^{n_{c}}$.
The Dual Linear Programming problem  (DLP) is given by
\begin{equation}
\begin{cases}
w^{\T}b & \rightarrow\max_w\\
\st & w_{i}\leq0,\,i\in M_{2}\\
 & w_{i}\in\mathbb{R},\,i\in M_{1}\\
 & \sum_{i\in M_1}w_{i}\{a_{i}\}_{x,y}+ \sum_{i\in M_2}w_{i}\{s_{i}\}_{x,y}\leq\rho_{x,y},\, \\ & \forall x,y\in\mathcal{X}\times\mathcal{Y}
\end{cases}.
\label{appeq:dp::stndrd}
\end{equation}
Recall that by slight abuse of notation, here, similarly to the main text, we use $x=x_\alpha$ and $y=y_\beta$ to denote their indices $\alpha$ and $\beta$, respectively.

Dual problems are useful for establishing lower bounds on the optimal value, due to the property of \textit{weak duality}, which assures that every feasible value for the Primal problem is greater than or equal to every feasible value of its Dual, yielding (in case where both problems are feasible)
\begin{equation}
\min_{\bfQ}\rho\bullet \bfQ\geq\max_{w}w^{\T}b.
\label{eq::DPdiscrete:weak duality}
\end{equation}

For feasible, bounded LP problems we further possess a \textit{strong duality},
namely the problem \eqref{appeq:dp::stndrd} is feasible and
\begin{equation}
\min_{\bfQ}\rho\bullet \bfQ=\max_{w}w^{\T}b.
\label{appeq::DPdiscrete:strong duality}
\end{equation}

\subsection{A dual form for the TV distance setting}

For our future analysis, here it is convenient to derive the dual of the form  \eqref{eq:DP::LPform} to  $D(P)$. In this formulation, we have $\rho=\bfD^{\T}\bfP_{X,Y}$, and we can write the parameter $b$ in \eqref{appeq:dp::stndrd_primal} as
\begin{gather}
b^{\T}=b(P)^{\T} \\ \nonumber
=\left[p_{y_{1}},\ldots,p_{y_{n}},2P-S_{1}^{\T}\bfP_{X},\ldots,2P-S_{2^{|\mathcal{X}|}-2}^{\T}\bfP_{X}\right],
\end{gather}
where $P$ is the perception index. Also,
\begin{align}
a_{j} &=\bfP_Y({y_{j}})\boldsymbol{1}^{|\mathcal{X}|\T}e_{j},\,j=1,\ldots,|\mathcal{Y}|~,\\
s_{i} &=S_{i}\bfP_{Y}^{\T},\,i=1,\ldots,2^{|\mathcal{X}|}-2~,
\end{align}
$S_{i}$ are the vectors of the set 
 $\left\{-1, 1 \right\}^{|\mathcal{X}|} \backslash \left\{ \pm \left[1,...,1\right] \right\}$, and
 $e_j$ is the $j$-th unit vector in the standard basis.

For convenience, let us split the decision variables in \eqref{appeq:dp::stndrd} into two groups; $|\mathcal{Y}|$ variables $\{w_y\}_{y\in \mathcal{Y}}$ related to the stochasticity constraint for each symbol in $\mathcal{Y}$, and the $2^{|\mathcal{X}|}-2$ variables $\{\nu_i\}$ related to the perception constraints \eqref{eq:dtv-2x-linear-constraints}.
Now, \eqref{appeq:dp::stndrd} becomes
\begin{equation}
\begin{cases}
[w^{\T},\nu^{\T}]b & \rightarrow\max_{w,\nu}\\
\st & \nu_{i} \leq 0,\,\,i=1,\ldots,2^{|\mathcal{X}|}-2\\
 & w_{y}\in\mathbb{R},\,\forall y \in \mathcal{Y} \\
 & \sum_{j}w_{j}\{a_{j}\}_{x,y} - \sum_{i}\nu_{i}\{s_{i}\}_{x,y}\leq\rho_{x,y},\\ &\forall x,y\in\mathcal{X}\times\mathcal{Y},
\end{cases},\label{appeq:DPdiscrete}
\end{equation}
where $\{a_j\}$ is a matrix containing $\bfP_Y({y_j})$ along its $j$-th column, and $0$ elsewhere. Hence, for every $y\in\mathcal{Y}$, only the corresponding $\{a_j\}$ (where $y_j=y$) contributes to the sum, namely $\sum_{j}w_{j}\{a_{j}\}_{x,y} = \bfP_Y(y)w_y $ regardless of $x$. In addition, $s_i$ is a matrix where the $j$-th column is given by $\bfP_Y(y_j)S_i$, thus $\{s_{i}\}_{x,y}=\bfP_Y(y)\{S_i\}_x$. Recall that we denote 
\begin{equation}
\rho'_{\hat x,y}=\frac{\rho{}_{\hat x,y}}{p_Y({y})},
\end{equation}
and directly write \eqref{appeq:DPdiscrete} as
\begin{equation}
\begin{cases}
\sum_{y}\bfP_{Y}(y)w_{y} & +2P\left(\sum_{i}\nu_{i}\right)-\left(\sum_{i}\nu_{i}S_{i}^{\T}\right)\bfP_{X}\rightarrow\max_{w,\nu}\\
\st & \nu_{i}\leq0,\,\,i=1,\ldots,2^{|\mathcal{X}|}-2\\
 & w_{y}\leq\rho'_{x,y}+\sum_{i}\nu_{i}\{S_{i}\}_{x},\,
 \forall x,y\in\mathcal{X}\times\mathcal{Y}
\end{cases}
\label{appeq:DPdiscrete_explicit}
\end{equation}
Note again that $\{S_i\}_x$ is the sign of $\bfP_X(x)-\sum_y \bfQ(x|y)\bfP_Y(y)$ in the $i$-th constraint of \eqref{eq:dtv-2x-linear-constraints}, and that $\sum_i \nu_i S_i^{\T}\bfP_x = \sum_x \bfP_X(x)\sum_i \nu_i \{S_i\}_x $. 
We can write the dual feasible set as
\begin{equation}
\mathcal{S}=\left\{ p_{i}=[w^i, \nu^i]\in\mathbb{R}^{|\Y|+R}:p_{i}^{\T}\bfA\leq c\right\},
\end{equation}
where $R=2^{|\X|}-2$, $\otimes$ is the Kronecker product,
\begin{equation}
\bfA=\left[\begin{array}{c|c}
1^{|\X|}\otimes \mathrm{diag}\{\bfP_{Y}\} & 0_{|\Y|\times R}\\
\hline -S\otimes \bfP_{Y}^{\T} & I_{R}
\end{array}\right]\in\mathbb{R}^{(|\Y|+R)\times(|\X||\Y|+R)},
\end{equation}
and
\begin{equation}
c=\left[\mathrm{rowstack}\{\rho\}^\T,0_{1\times R}\right]\in\mathbb{R}^{|\X||\Y|+R}.
\end{equation}
 Now, $S$ is a matrix whose $j$-th row is $S_{j}^{T}$. Since the Primal problem possesses a finite optimal solution for $P\in[0,\infty)$, its dual must be feasible and bounded (\cite[pp.151]{bertsimas1997introduction}).
 It is then easy to see that $\mathrm{rank}(\bfA)=|\Y|+R$, hence its columns span $\R^{|\Y|+R}$ and the set $\mathcal{S}$ has an extreme point (see \cite[Thm. 2.6]{bertsimas1997introduction}).

\subsection{Derivation of Eq. \eqref{eq:DPdiscrete_explicit}}

Recall the notation $\rho =  \bfD^\T \bfP_{X,Y}$. The program \eqref{eq:DP::OTform} can be written as
\begin{equation}
\min_q c^\T q, \st \, \bfA q = b, q\geq0 
\end{equation}
where $q= \left[ {\rm rowstack}\left\{\bfQ \right\},{\rm rowstack}\left\{\bfPi\right\},\varepsilon \right]$,
\begin{equation}
\bfA = 
\left[
\begin{array}{c|c|c}
1^{1\times|\X|} \otimes {\rm diag}\left\{\bfP_Y\right\} & 0&0 
\\ \hline
0& I_{|\X|} \otimes 1^{1\times|\X|} & 0 
\\ \hline
I_{|\X|} \otimes \bfP_Y^\T & -1^{1\times|\X|} \otimes I_{|\X|} &0 
\\ \hline
0& {\rm rowstack}\left\{H\right\}^\T &1
\end{array} \right]
\end{equation}
and
\begin{eqnarray}
 b =& \left[ \bfP_Y^\T, \bfP_X^\T, 0_{|\X|},P \ \right]^\T, &
\\
c=&\left[\mathrm{rowstack}\{\rho\},0_{1\times (|\X|^2 + 1)}\right]\in\mathbb{R}^{|\X|(|\Y|+|\X|)+1}. &
\end{eqnarray}

We now can write \eqref{eq:dp::stndrd} as 
\begin{equation}
\max_{\tilde{w}} \tilde{w}^\T b, \, \st \  \tilde{w}^\T \bfA \leq c^\T.
\end{equation}
We denote  $\tilde{w}=[w,r,\nu,l]$ where $w \in \R^{|\Y|},r,\nu \in \R^{|\X|}$ and $l$ is a scalar. 
The first $|\X|\times|\Y|$ columns of $\bfA$ yield inequalities of the form 
\begin{equation}
p_y w_y + p_y \nu_{\hat x} \leq \rho_{\hat x ,y}, \ \hat x,y \in \X \times \Y
\end{equation}
while the  next $|\X|^2$ columns yield
\begin{equation}
r_x -  \nu_{\hat x} + \bfH_{x, \hat x} l \leq 0, \ x,\hat x \in \X^2.
\end{equation}
Finally, the last inequality simply says $l\leq0$.

Now, the above is given in the matrix form 
\begin{flalign}
&\max_{w,r,\nu,l} \left[ w^\T\bfP_Y+r^\T \bfP_X + lP\right] &
\\
\nonumber
&\st 
\begin{cases}
    1^{|\X|\times 1}\otimes (\bfP_Y \hadm w)^\T + 1^{1\times|\Y|}\otimes(\bfP_Y^\T \hadm \nu) \leq \rho
    \\
    1^{|\X|\times 1}\otimes r^\T - 1^{1\times|\X|}\otimes \nu + \bfH^\T \cdot l \leq 0
    \\ l \leq 0,
\end{cases}&
\end{flalign}
where $\otimes$ is the Kronecker product and $\hadm$ is the Hadamard (elementwise) product. Inequalities between matrices are applied elementwise. To obtain \eqref{eq:DPdiscrete_explicit}, we replace a variable sign ($l \to -l$) and use the connection \eqref {eq:rho_prime}, $\rho = (1^{|\X|}\otimes \bfP_Y^\T)\hadm \rho'$.

It is easy to see that in this case ${\rm rank}(\bfA) = |\Y| + 2|\X|$ while one constraint is redundant, namely we can eliminate a linear constraint from the primal program (a row of $\bfA$) such that the \textit{row rank} of the problem is full. Equivalently, we can set one of the variables $r_x,\nu_{\hat x}$ to $0$, and the dual feasible set (projected onto $\R^{|\Y|+2|\X|}$) will not contain a line. This implies the existence of an extreme point in the dual feasible set.

\subsection{Full characterization for binary sources (proof of Theorem \ref{thm::DP:binarysource})}
\label{appsec::binarychannel}

Recall we discuss the case of binary sources where $\X = \{x_1,x_2\}$ 
with probabilities $p_{x_1},p_{x_2}$ respectively, 
and $\Y$ is of an arbitrary size $n_y$. 
As a perceptual index, we consider the TV distance \eqref{eq:tv::def} while the distortion matrix is arbitrary, yielding the matrix $\rho'$ defined in \eqref{eq:rho_prime}.
Denote
    $u_y = \hlf \left(\rho'_{\hat{x}_1y}-\rho'_{\hat{x}_2y}\right)$
which is half the cost of reconstructing $y$ as $x_1$ over reconstructing as $x_2$, and we assume \WLOG that
    $u_{y_1} \leq u_{y_2}\leq \ldots \leq u_{y_n}$.
    We define $P_Y^-(u)={\rm Pr}\{u_Y \leq u\}=\sum_{y:u_y\leq u}\bfP_Y(y)$.
We further denote the symbols $y^*_i$ whose $u_y$ is non-zero, namely
\begin{eqnarray}
0 = u_0 < u_1=u_{y^*_1}\leq \ldots
    \leq u_{M^+} = u_{y^*_{M^+}},
    \\
   u_{-{M^-}} = u_{y^*_{-M^-}}\leq \ldots \leq u_{-1}=u_{y^*_{-1}}< 0 = u_0.
\end{eqnarray}

\begin{theorem}
\label{appthm::DP:binarysource} (Theorem \ref{thm::DP:binarysource} in the main text).
Assume that $p_{x_1}\geq P_Y^-(0)$, and let $I=\max\{i\colon p_{x_1} \geq P_Y^-(u_i) \}$. Then, the DP function $D(P)$ is piecewise linear with breakpoints $\{P^*_i\}_{i=0}^I$ given by 
\begin{equation}
\label{appeq:Pibrakpoint}
P^*_i = p_{x_1}-P_Y^-(u_i),
\end{equation}
where, specifically, $P_0^*=p_{x_1}-P_Y^-(0)=P^*$. 
The DP function is then given by 
\begin{equation}
D(P)=\begin{cases}
D^{*}, & P\geq P_{0}^{*}\\
D(P_{i-1}^{*})+2u_{i}\left(P_{i-1}^{*}-P\right), & P_{i}^{*}\leq P\leq P_{i-1}^{*}\\
D(P_{I}^{*})+2u_{I+1}\left(P_{I}^{*}-P\right), & 0\leq P\leq P_{I}^{*}
\end{cases}.
\end{equation}

If $P_Y^-(0^-) \geq p_{x_1}$, then similarly $P^*_0=P_Y^-(0^-)-p_{x_1}$, and $P^*_i=P_Y^-(u_{-i-1})-p_{x_1}$, while it is non-negative,
and $D(P)$ is determined analogously.
In the case $P_Y^-(0) \geq p_{x_1} \geq  P_Y^-(0^-)$, $P^*=0$ and $D(P)\equiv D^*$ for all $P\geq0$.
\end{theorem}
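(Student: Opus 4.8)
The plan is to use the binary structure of $\X$ to collapse the DP problem \eqref{eq:DP::OTform} into a one-dimensional convex program, solve the inner mass-allocation step greedily in closed form, and then read off the breakpoints and slopes. Since $|\X|=2$, a stochastic matrix $\bfQ$ is determined by the vector $q_y := q(x_1\mid y)\in[0,1]$, $y\in\Y$. Writing $C_0 := \sum_y \bfP_Y(y)\,\rho'_{\hat x_2,y}$ and using $\rho'_{\hat x_1,y}-\rho'_{\hat x_2,y}=2u_y$, the expected distortion is affine in $q$, namely $\EEQ{d(X,\hat X)} = C_0 + 2\sum_y \bfP_Y(y)\,u_y\,q_y$; the reconstructed marginal is $p_{\hat x_1}=\sum_y \bfP_Y(y)q_y =: t$, and since $\X$ is binary $d_{TV}(\bfP_X,\bfP_{\hat X})=|p_{x_1}-t|$. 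Hence
\[
D(P)=C_0 + 2\min_{t\in[0,1]}\bigl\{\, g(t)\ :\ |p_{x_1}-t|\le P \,\bigr\},
\]
where $g(t)$ is the minimum of $\sum_y \bfP_Y(y)u_yq_y$ over $q\in[0,1]^{n_y}$ subject to $\sum_y \bfP_Y(y)q_y=t$. This inner problem is a fractional-knapsack / water-filling problem: to deposit total $\bfP_Y$-mass $t$ minimizing the weighted cost, one fills the coordinates in increasing order of $u_y$. Thus $g$ is convex and piecewise linear on $[0,1]$, with breakpoints exactly at the cumulative masses $P_Y^-(u)$ over the distinct values $u$ of $u_y$, and with slope $u$ on the segment where the coordinates with $u_y=u$ are being filled. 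In particular $g$ is nonincreasing on $[0,P_Y^-(0^-)]$, constant on $[P_Y^-(0^-),P_Y^-(0)]$, and nondecreasing on $[P_Y^-(0),1]$, so $\min_{[0,1]}g=g(P_Y^-(0))=\sum_{y:u_y<0}\bfP_Y(y)u_y$, which gives $C_0+2g(P_Y^-(0))=D^*$, consistent with Proposition~\ref{thm:Dstar}.

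Next I analyze the outer minimization over $t$. Take the main case $p_{x_1}\ge P_Y^-(0)$ and set $P^*=p_{x_1}-P_Y^-(0)$. For $P\ge P^*$ the minimizer $t=P_Y^-(0)$ is feasible, since $|p_{x_1}-P_Y^-(0)|=P^*\le P$, so $D(P)=D^*$. For $0\le P<P^*$ the feasible set for $t$ is $[\,p_{x_1}-P,\ \min(1,p_{x_1}+P)\,]$, which is a nonempty subinterval of $(P_Y^-(0),1]$; since $g$ is convex with a minimizer at $P_Y^-(0)$, its minimum over that interval is attained at the left endpoint $t=p_{x_1}-P$, i.e. $D(P)=C_0+2g(p_{x_1}-P)$ on $[0,P^*]$. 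Substituting $t=p_{x_1}-P$ into the piecewise-linear $g$: the breakpoints of $D$ are the values of $P$ at which $p_{x_1}-P$ hits a breakpoint of $g$ inside $[P_Y^-(0),p_{x_1}]$, i.e. $P^*_i=p_{x_1}-P_Y^-(u_i)$ for $i=0,\dots,I$ with $I=\max\{i:p_{x_1}\ge P_Y^-(u_i)\}$; on $[P^*_i,P^*_{i-1}]$ the slope of $g$ is $u_i$, and since $p_{x_1}-P^*_{i-1}=P_Y^-(u_{i-1})$ this gives $D(P)=D(P^*_{i-1})+2u_i(P^*_{i-1}-P)$; on $[0,P^*_I]$ the slope of $g$ is $u_{I+1}$, because $p_{x_1}<P_Y^-(u_{I+1})$ by maximality of $I$, which gives the last branch. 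This is exactly the stated formula.

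The remaining cases follow by the same reasoning with the endpoints swapped. If $P_Y^-(0^-)\ge p_{x_1}$, then $p_{x_1}$ lies to the left of the minimizing region of $g$; the outer minimum sits at the right endpoint $t=p_{x_1}+P$ where $g$ is nonincreasing, and the mirror-image computation yields the breakpoints $P_Y^-(0^-)-p_{x_1}$ and $P_Y^-(u_{-i-1})-p_{x_1}$. If $P_Y^-(0^-)\le p_{x_1}\le P_Y^-(0)$, then $p_{x_1}$ lies in the flat part of $g$, so $t=p_{x_1}$ (perfect perception) is feasible for every $P\ge 0$ and $D(P)\equiv D^*$. The only genuinely substantive step is the endpoint claim in the outer minimization, which is immediate from convexity of $g$; everything else — nonemptiness of the feasible $t$-interval and its containment in $[0,1]$ for $P<P^*$, the collapse of consecutive $P^*_i$ when the $u_y$'s coincide (Remark~\ref{rem:degenerate_u}), and the borderline cases in which $p_{x_1}$ equals some cumulative mass $P_Y^-(u_i)$ — is routine bookkeeping and is where most of the writing effort goes. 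As a consistency check one can instead recover these segments from the general dual machinery of Theorems~\ref{thm:piecewiseLinear}–\ref{thm::r2convex} by exhibiting the corresponding dual vertices, but the primal reduction above is self-contained.
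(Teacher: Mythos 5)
Your proof is correct, but it takes a genuinely different route from the paper's. The paper proves this theorem entirely on the dual side: it specializes the dual program \eqref{eq:DPdiscrete_explicit} to $|\X|=2$, collapses it to a one-dimensional \emph{concave maximization} $J_P(u)$ over the scalar dual variable $u=\nu_1-\nu_2$, observes that the maximum is attained at one of the finitely many candidates $\{0,u_{y_1},\dots,u_{y_n}\}$, and reads off the breakpoints as the values of $P$ at which the maximizing candidate switches. You instead stay on the primal side: you parameterize $\bfQ$ by $q_y=q(x_1\mid y)$, split the optimization into an inner fractional-knapsack problem $g(t)$ at fixed output marginal $t=p_{\hat x_1}$ and an outer one-dimensional convex minimization over $t$ subject to $|p_{x_1}-t|\le P$, and obtain $D(P)=C_0+2g(p_{x_1}-P)$ on $[0,P^*]$ by projecting onto the feasible interval. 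The two are Legendre-dual views of the same structure ($g$ is essentially the conjugate of $J_P$), and both correctly yield the breakpoints $p_{x_1}-P_Y^-(u_i)$ and slopes $2u_i$, including the mirror and flat cases. What the paper's dual route buys is continuity with the general machinery of Theorems~\ref{thm:piecewiseLinear}--\ref{thm::r2convex} (the slope $-2|u|$ is literally the dual variable $-l$, and the candidate points are dual vertices). What your primal route buys is constructiveness: the greedy filling argument directly exhibits the optimal estimators --- deterministic thresholding on $u_y$ at breakpoints and convex interpolation in between --- so the subsequent Corollary falls out for free, and the ``greedy allocation'' intuition stated informally after Theorem~\ref{thm::DP:binarysource} becomes the proof itself. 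The only steps you defer as bookkeeping (nonemptiness and location of the feasible $t$-interval, degenerate coincident $u_i$, borderline equalities $p_{x_1}=P_Y^-(u_i)$) are indeed routine and match Remark~\ref{rem:degenerate_u}.
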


\begin{proof}
Let $\mathcal{X}=\{x_{1},x_{2}\}, \mathcal{Y}=\{y_{1,},\ldots, y_{n_y}\}$. The
dual problem \eqref{appeq:DPdiscrete_explicit} is now written as
\begin{flalign}
\nonumber &\max_{w,\nu} \left[ \sum_{y\in \Y}p_{y}w_{y} \!+\!p_{x_{1}}\left(\nu_{1}\!-\!\nu_{2}\right)\!-\! p_{x_{2}}\left(\nu_{1}\!-\!\nu_{2}\right) \!-\! 2P\left(\nu_{1}\!+\!\nu_{2}\right) \right]& 
\\
&\st \begin{cases}  \ \nu_{1},\nu_{2}\geq 0, &
\\ 
w_{y}\leq\rho'_{\hat{x}_1y}-\left(\nu_{1}-\nu_{2}\right),\ \rho'_{\hat{x}_2y}+\left(\nu_{1}-\nu_{2}\right), \forall y \in \Y
\end{cases}
,
\end{flalign}
where we changed the sign of variables $\nu_i \leftarrow -\nu_i$ for convenience. Now, we denote $u=\nu_{1}-\nu_{2}$. Note that $\nu_1+\nu_2 = |u|+2\min \{\nu_1,\nu_2\}$. Since $P\geq0$ and both $\nu_1,\nu_2$ are non-negative, in an optimal solution we must choose $\min \{\nu_1,\nu_2\}$ to be $0$, which implies $|u|=\nu_1+\nu_2$.  The optimization objective in this case boils down to
\begin{align}
&J(u)= &&\label{eq:DP::n=00003D2}
\\ \nonumber
&\sum_{y\in \Y}p_{y} \min\left\{ \rho'_{\hat{x}_1y}-u,\rho'_{\hat{x}_2y}+u\right\} +\left[2p_{x_{1}}-1\right]u-2P|u|,
&& \end{align}
where in an optimal solution we must have $w_{y}=\min\left\{ \rho'_{\hat{x}_1y}-u,\rho'_{\hat{x}2y}+u\right\} $ since every
$w_{y}$ should be maximal under the constraints. 

We can finally write the dual objective in this case as
\begin{align}
    \nonumber &J_P(u) = \\ &\sum_{y\in\Y}\bfP_Y(y) \min \{ \rho'_{\hat{x}_1y}-u, \rho'_{\hat{x}_2y}+u \} + \left[2p_{x_1}-1\right]u - 2P|u|
    \nonumber
    \\
    \nonumber
     & =\!\!\! \sum_{y:u_y \leq u}\!\!\!\rho_{\hat{x}_1y}\!+\!\!\!\!\sum_{y:u_y > u}\!\!\!\rho_{\hat{x}_2y}\!+\! ( 1\!-\! P_Y^-(u))u\!-\! P_Y^-(u)u\!
     \\ \nonumber
     &+\!(2p_{x_1}\!\!-\!\!1)u \!- \!2P|u|
     \\
    &= \sum_{y:u_y \leq u}\rho_{\hat{x}_1y}+\sum_{y:u_y > u}\rho_{\hat{x}_2y} +2(p_{x_1}- P_Y^-(u))u - 2P|u|. \label{appeq:jpu_fprm}
\end{align}
For any $P\geq0$, this is a \textit{concave} function in the parameter $u$, whose  maximal value is obtained on one of the points where the coefficient of $u$ might change its sign.
\begin{equation}
    D(P)=\max_u J_P(u) = \max \{J_P(0),J_P(u_{y_1}),\ldots, J_P(u_{y_n})\}.
\end{equation}
Note that for each $y\in \Y$, $J_P(u_y)$ is a linear function of $P$, with slope $-2|u_y|$. This is true for $u=0$ as well. The \textit{breakpoints} of $D(P)$ are the points where two (or more) of these functions attain optimality, namely where $\argmax_uJ_P(u)$ contains more than one argument. Since $J_P(u)$ is concave \wrt $u$, $\argmax_uJ_P(u)$ must also contain any interval between these points. 

As we have already seen, for $P \geq 1$ the DP function is flat, 
\begin{equation}
    D(P) = D^* =J_P(0)=\sum_y \min_{\hat{x} }\rho_{\hat{x}y}, \quad P\geq 1.
\end{equation}
It is easy to see from \eqref{appeq:jpu_fprm} that in fact, $J_P(0)=D^*$ for every $P$. To the right of this point,
\begin{equation}
\label{appeq:JP0_rhs}
J_P(u \to 0^+)=D^* + 2(p_{x_1}-P_Y^-
(0))u -2Pu,
\end{equation}
where to the left,
\begin{equation}
\label{appeq:JP0_lhs}
J_P(u\to 0^-)=D^* + 2(p_{x_1}-P_Y^-
(0^-))u +2Pu.
\end{equation}
We comment that $P_Y^-(0^-) = \sum_{y:u_y<0}\bfP_Y(y)\leq P_Y^-(0)$, where $P_Y^-(0^-) = P_Y^-(u_{-1})$ if the latter is defined.

Assume now $p_{x_1}\geq P_Y^-(0) \geq P_Y^-(0^-)$, then for every $P\geq0$, $J_P(u)$ is non-decreasing as $u\rightarrow 0 ^-$. Since it is also concave, the maximal value must be attained at $u=0$ or on the remaining positive candidate points, which we notate  
\begin{equation}
    0< u_1=u_{y^*_1}\leq \ldots
    \leq u_{M^+} = u_{y^*_{M^+}}.
\end{equation}
At the first breakpoint $P^*_0=P^*$, where $0,u_1 \in \argmax J_P(u)$, we should have $J_{P^*}(0)=J_{P^*}(u_1)$, or equivalently
\begin{equation}
    2P_0^*u = 2(p_{x_1}-P_Y^-(0))u, \ 0 \leq u < u_1,
\end{equation}
yielding $P_0^* = p_{x_1}-P_Y^-(0)$.
Similarly, for every possible breakpoint we should have
\begin{equation}
    2P^*_iu = 2(p_{x_1}-P_Y^-(u_i))u, \ u_i \leq u < u_{i+1},
\end{equation}
implying
$P^*_i = p_{x_1}-P_Y^-(u_i)$ for every $i$ such that $p_{x_1} \geq P_Y^-(u_i)$. 
(for a discussion about the case where $u_i$ might be equal to $u_{i+1}$ we refer the reader to Remark~\ref{rem:degenerate_u} in the main text).

If $P_Y^-(0)  \geq  P_Y^-(0^-) \geq p_{x_1}$, then \eqref{appeq:jpu_fprm} is non-decreasing as $u \to 0^+$.
Now, maximum must occur at $u=0$ or the remaining \textit{negative} candidates, $u_{-i}$. By arguments similar to the case above, $P^*_0=P_Y^-(0^-)-p_{x_1}$, and $P^*_i=P_Y^-(u^-_{-i})-p_{x_1}=P_Y^-(u_{-i-1})-p_{x_1}$ while it is non-negative.

Finally, in the case $P_Y^-(0) \geq p_{x_1} \geq  P_Y^-(0^-)$, \eqref{appeq:JP0_lhs} is non-decreasing, while \eqref{appeq:JP0_rhs}  is non-increasing for every $P$, implying $max_u J_P(u) = J_P(0) = D^*$ hence in this case $P^*=0, D(P)\equiv D^*$.
\end{proof}

\end{document}